\DeclareMathOperator{\Tr}{Tr}
\DeclareMathOperator{\e}{e}
\DeclareMathOperator{\im}{i}
\DeclareMathOperator{\di}{d}
\newcommand{\lab}{(inv)}
\newtheorem{theorem}{Theorem}
\newtheorem{corollary}{Corollary}
\newtheorem{lemma}{Lemma}
\numberwithin{equation}{section}
\title{Spectra and eigenstates of spin chain Hamiltonians}
\author[1]{J.P. Keating\thanks{j.p.keating@bristol.ac.uk}}
\author[1]{N. Linden\thanks{n.linden@bristol.ac.uk}}
\author[1]{H.J. Wells\thanks{huw.wells@bristol.ac.uk}}
\affil[1]{School of Mathematics, University of Bristol, Bristol, BS8 1TW, UK}
\begin{document}
\date{March 2014}
\maketitle

\begin{abstract}
We prove that translationally invariant Hamiltonians of a chain of $n$ qubits with nearest-neighbour interactions have two seemingly contradictory features.  Firstly in the limit $n\rightarrow\infty$ we show that any translationally invariant Hamiltonian of a chain of $n$ qubits has an eigenbasis such that almost all eigenstates have maximal entanglement between fixed-size sub-blocks of qubits and the rest of the system; in this sense these eigenstates are like those of completely general Hamiltonians (i.e. Hamiltonians with interactions of all orders between arbitrary groups of qubits).  Secondly in the limit $n\rightarrow\infty$ we show that any nearest-neighbour Hamiltonian of a chain of $n$ qubits 
has a Gaussian density of states; thus as far as the eigenvalues are concerned the system is like a non-interacting one.   The comparison applies to chains of qubits with translationally invariant nearest-neighbour interactions, but we show that it is extendible to much more general systems (both in terms of the local dimension and the geometry of interaction).  Numerical evidence is also presented which suggests that the translational invariance condition may be dropped in the case of nearest-neighbour chains.
\end{abstract}

\section{Introduction}
Quantum spin chains are ubiquitous in modern physics.  Since Dirac and Heisenberg first observed a quantum ferromagnetic phase transition in the ground state of such a model in 1926 \cite{Mattis1988}, solving the long standing problem of ferromagnetism, their use has been widespread.  These models display a rich variety of quantum phase transitions \cite{Quantum} and have applications to high fidelity quantum state transfer \cite{Bose2003,Linden2004}.

Entanglement in the ground state of quantum spin chains, between a continuous block of $l$ spins and the rest of the chain, has also been widely studied by many authors \cite{Vidal2002,Vidal2004,Korepin2003,Korepin2004,Korepin2005}, via the entropy of entanglement; and techniques from random matrix theory have proved very productive \cite{Keating2004,Keating2005}.  Here, the entropy is logarithmic in $l$ for a wide range of critical chains.  Masanes \cite{Masanes2010} proved an area law for the entropy of entanglement of low energy eigenstates of a spatially-extended quantum system with local interactions.  Entanglement in the low energy eigenstates of chains has also been considered in \cite{Vedral2000,Vedral2001}, and for higher eigenstates for integrable models in \cite{Cardy2004}.

The entanglement of low lying eigenstates of nearest-neighbour Hamiltonians is in marked contrast to that of the eigenstates of completely general Hamiltonians (we can think of such a system as having interactions between all groups of parties of all orders).   Such an eigenstate, being a random state in the full Hilbert space, has entropy of entanglement proportional to the number of qubits $l$ in the limit $n\rightarrow\infty$ \cite{Winter2006}.

At the other end of the spectrum are completely local Hamiltonians (given by sums of terms involving only one qubit).  Such Hamiltonians have eigenstates that are product states, and a spectral density that is Gaussian in the large $n$ limit.

In this paper we identify two seemingly contradictory properties of translationally invariant Hamiltonians of $n$ qubits with nearest-neighbour interactions.  Firstly we prove that for general transitionally invariant Hamiltonians of a chain of $n$ qubits, almost all eigenstates have maximal entanglement between sub-blocks of $l$ qubits and the rest of the system in the limit $n\rightarrow\infty$.  In particular this applies to nearest-neighbour translationally invariant Hamiltonians.  Secondly we show that the density of states of such nearest-neighbour systems is Gaussian. Thus, even though the nearest-neighbour Hamiltonian is local and thus specified by a small number of parameters, loosely speaking, almost all eigenstates are rather like (at least if one looks at relatively small blocks of spins) those of completely general Hamiltonians but the distribution of eigenvalues is similar to a non-interacting system.  In fact we will prove that  the density of states has a Gaussian distribution even in the non-translationally invariant case.

While the results are initially proven for chains of qubits, we show that many of the results are extendible to much more general systems (both in terms of the local dimension and the geometry of interaction).

In quantum statistical mechanics, it is a fundamental question to understand the mechanism of equilibration \cite{Linden2009,Goldstein2010}. One idea is that it can be traced directly to properties of the eigenstates: the celebrated \lq\lq eigenstate thermalization\rq\rq\ hypothesis \cite{Deutsch1991,Srednicki1994}.  Although very attractive, and borne out in some examples \cite{Rigol2008}, this hypothesis has so far resisted proof.  While we do not prove the hypothesis, our work can be understood as providing evidence towards it: our results show that the infinite temperature eigenstates (which constitute almost all states) obey the hypothesis for a wide class of physically interesting systems.

This paper is split into the following sections:  In Section \ref{model} the two basic classes of nearest-neighbour spin chain Hamiltonians which exhibit the seemingly contradictory features of interest are introduced.  The first is a generic form of a spin chain Hamiltonian describing a ring of $n$ qubits each interacting with only their nearest-neighbours.  The second is a subclass of this, containing only those Hamiltonians which are invariant with respect to translation around the ring.

In Section \ref{results} the main results of the paper are given.  Theorem \ref{th2} and its corollary show that the purity of most reduced eigenstates of a general translationally invariant qubit chain Hamiltonian (including those with nearest-neighbour interactions), on a continuous block of an asymptotically small proportion of the total number of qubits, tends towards its minimal value.   That is, most eigenstates are maximally entangled between these qubits and the rest of the chain.    The proof of this theorem is given in Section \ref{Th2proof}; it only requires the translational invariance of the system.  Extensions to some non-translationally invariant chains, higher dimensional systems and the effect of eigenstate degeneracy are also given.  In section \ref{dis} numerical evidence is presented which suggests that the translational invariance condition may be dropped in the case of nearest-neighbour interactions.

When applied to nearest-neighbour translationally invariant qubit chains, Theorem \ref{th2} is in contrast to Theorem \ref{DOS} which shows that a central limit theorem holds for the spectrum of many general nearest-neighbour qubit chain Hamiltonians (including many translationally invariant ones), in the large-chain limit.  This limiting spectral behaviour is exactly that seen for generic Hamiltonians of systems of non-interacting qubits.  The proof of this theorem is given in Section \ref{Th1proof} and its generalisations to qudits, more general interaction geometries and a proof of a conjecture by Atas and Bogomolny \cite{Atas2014} are established.

The crossover between these interacting and non-interacting type behaviours is also particularly sensitive.  Given two large portions of a chain, the size of the terms connecting them in the Hamiltonian, the `interaction' term, can be arbitrarily small compared to the rest of the Hamiltonian and yet still affect the eigenstates of the system dramatically.

Section \ref{dis} outlines some open problems.

\section{Spin chain Hamiltonians}\label{model}
The following two classes of spin Hamiltonians, describing a chain or ring of $n$ qubits with nearest-neighbour interactions, will be used to demonstrate seemingly contradictory features present within their spectral decompositions.  The extensions to qudits and more general interaction geometries will be made where appropriate.

The first class contains Hamiltonians of the form
\begin{equation}
  H_n=\frac{1}{\sqrt{n}}\sum_{j=1}^n\sum_{a=0}^3\sum_{b=1}^3\alpha_{a,b,j}\sigma_j^{(a)}\sigma_{j+1}^{(b)}
\end{equation}
for any $\alpha_{a,b,j}=\alpha_{a,b,j}(n)\in\mathbb{R}$ and the matrices
\begin{equation}
  \sigma_{  j  }^{(a)}=I_2^{\otimes(j-1)}\otimes\sigma^{(a)}\otimes I_2^{\otimes(n-j)}
\end{equation}
where $\sigma^{(1)}$, $\sigma^{(2)}$ and $\sigma^{(3)}$ are the $2\times 2$ Pauli matrices and $I_2\equiv\sigma^{(0)}$ is the $2\times2$ identity matrix:
\begin{equation}\label{Pauli}
	\sigma^{(0)}=\begin{pmatrix}1&0\\0&1\end{pmatrix}\qquad
	\sigma^{(1)}=\begin{pmatrix}0&1\\1&0\end{pmatrix}\qquad
	\sigma^{(2)}=\begin{pmatrix}0&-\im\\\im&0\end{pmatrix}\qquad
	\sigma^{(3)}=\begin{pmatrix}1&0\\0&-1\end{pmatrix}
\end{equation}
The labelling is cyclic so that $\sigma_{n+1}^{(a)}\equiv\sigma_{1}^{(a)}$.  These matrices act on the Hilbert space of $n$ distinguishable qubits, $\left(\mathbb{C}^2\right)^{\otimes n}$, which is the $n$ fold tensor product of the individual qubit Hilbert spaces $\mathbb{C}^2$.

This class is seen to contain the most general Hamiltonians, up to the addition of the identity operator, that describe a ring of qubits which are only able to interact with their nearest-neighbours.  It characterises the Hamiltonians studied within a random matrix theory framework in our related paper \cite{KLW2014_2}.

The second class is a subclass of the first.  It includes only those Hamiltonians with a translational symmetry along the chain, specifically the matrices
\begin{equation}
  H_n^{(inv)}=\frac{1}{\sqrt{n}}\sum_{j=1}^n\sum_{a=0}^3\sum_{b=1}^3\alpha_{a,b}\sigma_j^{(a)}\sigma_{j+1}^{(b)}
\end{equation}
for any $\alpha_{a,b}=\alpha_{a,b}(n)\in\mathbb{R}$.

$H_n^{(inv)}$ is invariant under conjugation by the unitary translation operator, $T$, which acts as
\begin{equation}\label{T}
  T\Big(|x_1\rangle\otimes|x_2\rangle\otimes\dots\otimes|x_n\rangle\Big)=|x_n\rangle\otimes|x_1\rangle\otimes\dots\otimes|x_{n-1}\rangle
\end{equation}
if $|0\rangle$ and $|1\rangle$ are an eigenbasis for $\sigma^{(3)}$, for all $x_1,x_2,\dots,x_n\in\{0,1\}$.

\section{Results}\label{results}
The first analytic result of this paper concerns entanglement in the eigenstates of general translationally invariant Hamiltonians of a ring of $n$ sequentially labelled qubits.  Let such Hamiltonians be denoted $G_n^{\lab}$ so that $G_n^{\lab}=TG_n^{\lab}T^{-1}$.  In particular this includes the nearest-neighbour translationally invariant Hamiltonians $H_n^{(inv)}$ defined previously.  Denoting the consecutive qubits labelled $(1)$ to $(l)$ by $\mathcal{A}$ and the qubits labelled $(l+1)$ to $(n)$ by $\mathcal{B}$, see Figure \ref{split}, let $\rho_k=|\psi_k\rangle\langle\psi_k|$, for a joint eigenstate $|\psi_k\rangle$ of $G_n^{(inv)}$ and the translation operator $T$, and let the associated reduced density matrix on $\mathcal{A}$ be $\rho_{k,\mathcal{A}}=\Tr_\mathcal{B}\left(\rho_k\right)$.  The purity $\Tr_{\mathcal{A}}\left(\rho_{k,\mathcal{A}}^2\right)$, of $\rho_{k,\mathcal{A}}$, will be used as an indicator of the entanglement present in the joint eigenstates of $T$ and $G_n^{\lab}$, $|\psi_k\rangle$, between $\mathcal{A}$ and $\mathcal{B}$.  Here the maximal value of $1$ corresponds to a product state across $\mathcal{A}$ and $\mathcal{B}$ and the minimal value of $\frac{1}{2^l}$ corresponds to a maximally entangled state across $\mathcal{A}$ and $\mathcal{B}$.

\begin{figure}
  \centering
  \includegraphics{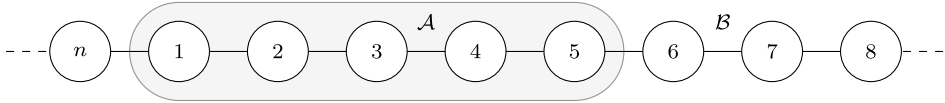}
  \caption{A nearest-neighbour spin chain of $n$ qubits.  The circles represent the qubits labelled $1$ to $n$ and the links the nearest-neighbour interactions.  The system is split into two subsystems; subsystem $\mathcal{A}$ comprised of the $l=5$ qubits labelled $1$ to $5$ (shading) and subsystem $\mathcal{B}$ the remaining $n-5$ qubits.}
  \label{split}
\end{figure}

The average of the purity $\Tr_{\mathcal{A}}\left(\rho_{\mathcal{A},k}^2\right)$ over all eigenstates for any fixed Hamiltonian $G_n^{(inv)}$ (in particular for a nearest-neighbour Hamiltonian $H_n^{(inv)}$) is bounded by:
\begin{theorem}[Average eigenstate purity on $l$ qubits for general translationally invariant Hamiltonians]\label{th2}
For $n=1,2,\dots$ let
\begin{equation}
  G_n^{(inv)}=TG_n^{\lab}T^{-1}
  =\sum_{k=1}^{2^n}\lambda_k|\psi_k\rangle\langle\psi_k|
\end{equation}
be a fixed sequence of any translationally invariant spin chain Hamiltonians of a ring of $n$ sequentially labelled qubits, where $\{|\psi_k\rangle\,:\,k=1,\dots,2^n\}$ is a common orthonormal basis of $G_n^{\lab}$ and the translation operator $T$, with the corresponding eigenvalues $\lambda_1\leq\lambda_2\leq\dots\leq\lambda_{2^n}$ of $G_n^{\lab}$.  Then, for $2l<n$, the reduced density matrices $\rho_{k,\mathcal{A}}=\Tr_\mathcal{B}\left(|\psi_k\rangle\langle\psi_k|\right)$ satisfy,
\begin{equation}
  \frac{1}{2^l}\leq\frac{1}{2^n}\sum_{k=1}^{2^n}\Tr_{\mathcal{A}}\left(\rho_{k,\mathcal{A}}^2\right)\leq\frac{1}{2^l}+\frac{2^l}{n}
\end{equation}
where $\mathcal{A}$ is the Hilbert space of the qubits labelled $(1)$ to $(l)$ and $\mathcal{B}$ is the Hilbert space of the remaining qubits.
\end{theorem}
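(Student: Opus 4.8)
The plan is to expand the eigenstate purity in a Pauli basis and then to exploit that each $|\psi_k\rangle$ is an eigenvector of the translation operator $T$. For $\vec a=(a_1,\dots,a_l)\in\{0,1,2,3\}^l$ write $P_{\vec a}=\sigma_1^{(a_1)}\sigma_2^{(a_2)}\cdots\sigma_l^{(a_l)}$, an operator acting non-trivially only on $\mathcal{A}$, and recall $\Tr(P_{\vec a}P_{\vec b})=2^n\delta_{\vec a,\vec b}$. Expanding $\rho_{k,\mathcal{A}}$ in the Pauli basis of $\mathcal{A}$ yields
\[
  \Tr_{\mathcal{A}}\!\left(\rho_{k,\mathcal{A}}^2\right)=\frac{1}{2^l}\sum_{\vec a\in\{0,1,2,3\}^l}\langle\psi_k|P_{\vec a}|\psi_k\rangle^2 .
\]
The term $\vec a=\vec 0$ equals $1$ for every $k$ and thus contributes $2^{-l}$ to the average, while every other term is non-negative, which already gives the lower bound. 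It therefore remains to show that for each $\vec a\ne\vec 0$
\[
  \frac{1}{2^n}\sum_{k=1}^{2^n}\langle\psi_k|P_{\vec a}|\psi_k\rangle^2\le\frac1n ,
\]
since then the average purity is at most $2^{-l}+2^{-l}(4^l-1)n^{-1}<2^{-l}+2^l/n$.

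To prove this estimate I would use translational invariance. Writing $T|\psi_k\rangle=e^{i\theta_k}|\psi_k\rangle$, the phases cancel in $\langle\psi_k|T^mXT^{-m}|\psi_k\rangle=\langle\psi_k|X|\psi_k\rangle$ for any operator $X$ and any integer $m$; applying this with $X=P_{\vec a}$ and averaging over $m=0,\dots,n-1$ gives $\langle\psi_k|P_{\vec a}|\psi_k\rangle=\langle\psi_k|Q_{\vec a}|\psi_k\rangle$ where $Q_{\vec a}=\tfrac1n\sum_{m=0}^{n-1}T^mP_{\vec a}T^{-m}$ is Hermitian. By the Cauchy--Schwarz inequality $\langle\psi_k|Q_{\vec a}|\psi_k\rangle^2\le\langle\psi_k|Q_{\vec a}^2|\psi_k\rangle$, so summing over $k$ and using completeness of the eigenbasis,
\[
  \frac{1}{2^n}\sum_{k=1}^{2^n}\langle\psi_k|P_{\vec a}|\psi_k\rangle^2\le\frac{1}{2^n}\Tr\!\left(Q_{\vec a}^2\right)=\frac{1}{n^2\,2^n}\sum_{m,m'=0}^{n-1}\Tr\!\left(P_{\vec a}^{(m)}P_{\vec a}^{(m')}\right),
\]
where $P_{\vec a}^{(m)}=T^mP_{\vec a}T^{-m}$ is the Pauli string $P_{\vec a}$ cyclically shifted by $m$ sites. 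By orthogonality of Pauli strings, $\Tr(P_{\vec a}^{(m)}P_{\vec a}^{(m')})$ vanishes unless $P_{\vec a}^{(m)}=P_{\vec a}^{(m')}$, in which case it equals $2^n$; so once we know the $n$ shifts are pairwise distinct the double sum reduces to $n\cdot 2^n$, giving exactly the bound $1/n$.

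The step needing care --- and the only one using the hypothesis $2l<n$ --- is showing that the shifts $P_{\vec a}^{(0)},\dots,P_{\vec a}^{(n-1)}$ of a non-identity Pauli string supported in a block of $l$ consecutive sites are pairwise distinct, equivalently that such a string cannot be invariant under a non-zero cyclic shift. If $P_{\vec a}^{(s)}=P_{\vec a}$ for some $s\not\equiv0\pmod n$, then the support of $P_{\vec a}$, a non-empty subset of $\{1,\dots,l\}$, is invariant under translation by $s$ and hence is a union of cosets of the cyclic subgroup $\langle s\rangle\le\mathbb Z/n\mathbb Z$; such a coset has $n/\gcd(s,n)\ge2$ elements spaced around the ring with gap $\gcd(s,n)$, so fitting inside an arc of $l$ consecutive sites forces $\gcd(s,n)>n-l$, whereas $\gcd(s,n)\le n/2$ since $s\not\equiv0$, which gives $n-l<n/2$, i.e. $2l>n$, a contradiction. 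Combining the three ingredients proves both inequalities. The same scheme should yield the generalisations announced afterwards: for a non-translationally-invariant chain one averages only over the symmetries actually present; for qudits one replaces the Pauli basis of $\mathcal{A}$ by any orthogonal operator basis; in higher dimensions the arc argument becomes the corresponding statement about translates of a bounded region; and eigenstate degeneracy is handled by a suitable choice of the common eigenbasis.
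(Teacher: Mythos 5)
Your proposal is correct and follows essentially the same route as the paper: the Pauli expansion of $\rho_{k,\mathcal{A}}$, the replacement of $P_{\vec a}$ by its translation average using $T|\psi_k\rangle=\e^{\im\theta_k}|\psi_k\rangle$, the bound of the sum of squared diagonal matrix elements by $\Tr(Q_{\vec a}^2)$, and the observation that the $n$ cyclic shifts of a non-identity Pauli string supported on $l$ consecutive sites are pairwise distinct when $2l<n$. The only differences are cosmetic (Cauchy--Schwarz per eigenstate rather than the global diagonal-sum inequality) or an improvement (your gcd/coset argument makes rigorous the distinctness claim that the paper merely asserts).
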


As the purity of $\rho_{k,\mathcal{A}}$ is at least $\frac{1}{2^l}$ the next corollary follows immediately:
\begin{corollary}[Minimal purity for almost all eigenstates]\label{epsilon}
For any fixed $\epsilon>0$ the proportion of eigenstates $|\psi_k\rangle$ for which $\Tr_{\mathcal{A}}\left(\rho_{k,\mathcal{A}}^2\right)\geq\frac{1}{2^l}+\epsilon$ tends to zero as $n\to\infty$ for fixed $l$.
\end{corollary}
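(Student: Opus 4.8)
The plan is to expand the purity of each reduced eigenstate in the Pauli basis of operators on $\mathcal A$ and then exploit the translation invariance of the eigenstates to bound the resulting sum of squared correlation functions. For $\vec a=(a_1,\dots,a_l)\in\{0,1,2,3\}^l$ write $P_{\vec a}=\prod_{i=1}^{l}\sigma_i^{(a_i)}$, a Pauli string acting nontrivially only on the qubits of $\mathcal A$. Expanding $\rho_{k,\mathcal A}$ in this basis and using $\Tr_{\mathcal A}(P_{\vec a}P_{\vec b})=2^l\delta_{\vec a,\vec b}$ together with $\Tr(\rho_{k,\mathcal A}P_{\vec a})=\langle\psi_k|P_{\vec a}|\psi_k\rangle$ gives
\begin{equation}
  \Tr_{\mathcal A}\!\left(\rho_{k,\mathcal A}^2\right)=\frac{1}{2^l}\sum_{\vec a\in\{0,1,2,3\}^l}\langle\psi_k|P_{\vec a}|\psi_k\rangle^2,
\end{equation}
with the $\vec a=(0,\dots,0)$ term equal to $2^{-l}$. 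Averaging over $k$ yields $\frac{1}{2^n}\sum_k\Tr_{\mathcal A}(\rho_{k,\mathcal A}^2)=2^{-l}+\frac{1}{2^l2^n}\sum_{\vec a\neq0}\sum_k\langle\psi_k|P_{\vec a}|\psi_k\rangle^2$, so, together with the trivial lower bound $\Tr_{\mathcal A}(\rho_{k,\mathcal A}^2)\geq 2^{-l}$ valid for any $2^l$-dimensional density matrix, the theorem follows once we establish
\begin{equation}\label{keyclaim}
  \sum_{k=1}^{2^n}\langle\psi_k|P_{\vec a}|\psi_k\rangle^2\leq\frac{2^n}{n}\qquad\text{for every }\vec a\neq(0,\dots,0),
\end{equation}
since the $4^l-1$ such terms then contribute at most $(4^l-1)/(2^l n)\leq 2^l/n$.

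To prove \eqref{keyclaim}, fix $\vec a\neq(0,\dots,0)$ and let $P_{\vec a}^{(j)}=T^j P_{\vec a}T^{-j}=\prod_{i=1}^{l}\sigma_{i+j}^{(a_i)}$, the same Pauli string translated to the window of qubits $(j+1),\dots,(j+l)$ (indices modulo $n$). Because each $|\psi_k\rangle$ is an eigenvector of $T$, the projector $|\psi_k\rangle\langle\psi_k|$ is translation invariant, so $\langle\psi_k|P_{\vec a}^{(j)}|\psi_k\rangle=\langle\psi_k|P_{\vec a}|\psi_k\rangle$ for all $j$, whence
\begin{equation}
  \langle\psi_k|P_{\vec a}|\psi_k\rangle=\frac{1}{n}\langle\psi_k|\Sigma_{\vec a}|\psi_k\rangle,\qquad\Sigma_{\vec a}:=\sum_{j=0}^{n-1}P_{\vec a}^{(j)}.
\end{equation}
Since $\Sigma_{\vec a}$ is Hermitian, the Cauchy--Schwarz inequality gives $\langle\psi_k|P_{\vec a}|\psi_k\rangle^2\leq n^{-2}\langle\psi_k|\Sigma_{\vec a}^2|\psi_k\rangle$, and summing over the orthonormal basis $\{|\psi_k\rangle\}$ bounds the left-hand side of \eqref{keyclaim} by $n^{-2}\Tr(\Sigma_{\vec a}^2)$.

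Finally one computes $\Tr(\Sigma_{\vec a}^2)=\sum_{j,j'=0}^{n-1}\Tr\!\big(P_{\vec a}^{(j)}P_{\vec a}^{(j')}\big)$. Each $P_{\vec a}^{(j)}$ is a Pauli string, so the diagonal terms equal $\Tr(I)=2^n$, while for $j\neq j'$ the product $P_{\vec a}^{(j)}P_{\vec a}^{(j')}$ is proportional to a Pauli string and hence traceless unless that string is the identity, i.e.\ unless $P_{\vec a}^{(j)}$ and $P_{\vec a}^{(j')}$ have identical support. This is the one point where the hypothesis $2l<n$ enters: a nonzero Pauli string supported within $l$ consecutive qubits cannot be fixed by any nontrivial cyclic shift when $2l<n$, because such a symmetry would force its support to span an arc of at least $n/2$ qubits; hence the $n$ translates $P_{\vec a}^{(0)},\dots,P_{\vec a}^{(n-1)}$ are pairwise distinct, all off-diagonal terms vanish, $\Tr(\Sigma_{\vec a}^2)=n\,2^n$, and \eqref{keyclaim} follows. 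I expect this combinatorial observation --- equivalently, the statement that the $n$ cyclic translates of a local Pauli observable are pairwise orthogonal in the Hilbert--Schmidt inner product whenever $2l<n$ --- to be the only real obstacle, the rest being bookkeeping. Corollary \ref{epsilon} is then immediate from Markov's inequality applied to the nonnegative quantity $\Tr_{\mathcal A}(\rho_{k,\mathcal A}^2)-2^{-l}$: the fraction of eigenstates with $\Tr_{\mathcal A}(\rho_{k,\mathcal A}^2)\geq 2^{-l}+\epsilon$ is at most $\epsilon^{-1}\cdot\frac{1}{2^n}\sum_k\big(\Tr_{\mathcal A}(\rho_{k,\mathcal A}^2)-2^{-l}\big)\leq 2^l/(\epsilon n)$, which tends to $0$ as $n\to\infty$ for fixed $l$ and $\epsilon$.
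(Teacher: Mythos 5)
Your proposal is correct and follows essentially the same route as the paper: expand $\rho_{k,\mathcal{A}}$ in the Pauli basis, use translation invariance of the eigenstates to replace each coefficient by the expectation of the averaged operator $\Sigma_{\vec a}$ (the paper's $M(\boldsymbol{a})$ up to a $\sqrt{n}$ normalisation), bound $\sum_k\langle\psi_k|\Sigma_{\vec a}|\psi_k\rangle^2$ by $\Tr(\Sigma_{\vec a}^2)=n2^n$ using the pairwise distinctness (hence Hilbert--Schmidt orthogonality) of the $n$ cyclic translates when $2l<n$, and finish with Markov's inequality. The only differences are cosmetic: you invoke per-state Cauchy--Schwarz where the paper drops off-diagonal terms of $\Tr(MM^\dagger)$, and you spell out the Markov step and the no-nontrivial-shift-symmetry argument that the paper leaves implicit.
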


Theorem \ref{th2} and Corollary \ref{epsilon} do not imply that the bound $\frac{1}{2^l}\leq\Tr_{\mathcal{A}}\left(\rho_{k,\mathcal{A}}^2\right)\leq\frac{1}{2^l}+\frac{2^l}{n}$ holds for every eignestate $|\psi_k\rangle$ but rather `on average' over the entire eigenbasis.  However in the large $n$ limit almost all (in the sense of Corollary \ref{epsilon}) eigenstates satisfy $\frac{1}{2^l}\leq\Tr_{\mathcal{A}}\left(\rho_{k,\mathcal{A}}^2\right)<\frac{1}{2^l}+\epsilon$ for any fixed $\epsilon>0$, that is have a purity arbitrarily close to a value corresponding to maximal bipartite entanglement.

The proof of Theorem \ref{th2} is given in Section \ref{Th2proof}.  Here, it is also shown that many matrices of the form $H_n^{\lab}$ have a non-degenerate spectrum (although this is not a condition of Theorem \ref{th2}), so that Theorem \ref{th2} describes the unique (up to phase) eigenstates of $H_n^{(inv)}$ in these cases.  A slight refinement of the theorem in the case of some matrices of the form $H_n$ is also given for the case $l=1$.

The second analytic result of this paper concerns the convergence of the density of states probability measure for many fixed sequences of matrices $H_n$ as $n\to\infty$ (which includes the translationally invariant Hamiltonians as a special case).  The density of states probability measure determines the density of the eigenvalues of $H_n$ on the real line.  The theorem provides a contrast to the eigenstate statistics of translationally invariant Hamiltonians, which are that expected from systems of highly interacting qubits.

\begin{theorem}[Limiting density of states measure for a sequence of Hamiltonians]\label{DOS}
	For $n=2,3,\dots$ let
	\begin{equation}
		H_n=\frac{1}{\sqrt{n}}\sum_{j=1}^n\sum_{a=0}^3\sum_{b=1}^3\alpha_{a,b,j}\sigma_j^{(a)}\sigma_{j+1}^{(b)}
	\end{equation}
	be a fixed sequence of spin chain Hamiltonians for any $\alpha_{a,b,j}=\alpha_{a,b,j}(n)\in\mathbb{R}$ such that for each $n$
	\begin{equation}\label{range}
		\frac{1}{2^n}\Tr_{\mathcal{AB}}\left(H_n^2\right)\equiv\frac{1}{n}\sum_{j=1}^n\sum_{a=0}^3\sum_{b=1}^3\alpha_{a,b,j}^2=1\qquad\text{and}\qquad
		\left|\alpha_{a,b,j}\right|<C
	\end{equation}
	for some positive constant $C$ independent of $a$, $b$, $j$ and $n$.  Then the density of states measure
	\begin{equation}
		\rho_n(\lambda)\di\lambda=\frac{1}{2^n}\sum_{k=1}^{2^n}\delta(\lambda-\lambda_k)\di\lambda
	\end{equation}
	for the eigenenergies  $\lambda_1,\dots,\lambda_{2^n}$ of $H_n$ tends weakly to that of a standard normal distribution, that is
	\begin{equation}
		\lim_{n\to\infty}\int_{-\infty}^{x^+}\rho_n(\lambda)\di\lambda=\frac{1}{\sqrt{2\pi}}\int_{-\infty}^x\e^{-\frac{\lambda^2}{2}}\di\lambda
	\end{equation}
        for all $x\in\mathbb{R}$ (the notation $x^+$ represents the limit as $x$ is approached from above).
\end{theorem}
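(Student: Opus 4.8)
The plan is to prove convergence to the standard normal by the method of moments: since a Gaussian is determined by its moments, it suffices to show that for each fixed $m$,
\begin{equation}
  \lim_{n\to\infty}\int_{-\infty}^\infty\lambda^m\rho_n(\lambda)\,\di\lambda
  =\lim_{n\to\infty}\frac{1}{2^n}\Tr\!\left(H_n^m\right)
  =\begin{cases}(m-1)!!&m\text{ even}\\0&m\text{ odd}.\end{cases}
\end{equation}
First I would expand $\Tr(H_n^m)$ using the definition of $H_n$: writing $H_n=n^{-1/2}\sum_{j}P_j$ where $P_j=\sum_{a,b}\alpha_{a,b,j}\sigma_j^{(a)}\sigma_{j+1}^{(b)}$ is the (normalised-coefficient) two-site term straddling the bond $(j,j+1)$, one gets
\begin{equation}
  \frac{1}{2^n}\Tr\!\left(H_n^m\right)
  =\frac{1}{n^{m/2}}\sum_{j_1,\dots,j_m=1}^n \frac{1}{2^n}\Tr\!\left(P_{j_1}P_{j_2}\cdots P_{j_m}\right).
\end{equation}
The key structural fact is that the normalised trace factorises over disjoint supports: if a site $s$ is touched by exactly one of the operators $P_{j_1},\dots,P_{j_m}$ (i.e.\ it appears as $\sigma_s^{(a)}$ with $a\ge 1$ in one factor and as the identity in all the others, and no neighbouring term forces a Pauli there), then $\frac{1}{2}\Tr_s(\sigma^{(a)})=0$ kills the whole term. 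Hence the only surviving index tuples $(j_1,\dots,j_m)$ are those in which the bonds can be grouped so that every site is covered an "even enough" number of times; in particular each bond index value must occur at least twice, forcing the number of distinct bond-values to be at most $m/2$.

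The counting then goes as follows. A tuple with at most $m/2$ distinct bond-values contributes $O(n^{m/2})$ choices (choose the $\le m/2$ distinct values in $O(n^{m/2})$ ways, then assign the $m$ slots among them in $O(1)$ ways), so after dividing by $n^{m/2}$ only the tuples using \emph{exactly} $m/2$ distinct bond-values survive in the limit; these force $m$ even and, within each matched pair, the two $P_j$'s with the same $j$ must pair up to give $\frac{1}{2^n}\Tr(P_j^2\cdot(\text{stuff}))$ contributing (using $\Tr(\sigma^{(a)}\sigma^{(b)})=2\delta_{ab}$ and the normalisation) a factor $\sum_{a,b}\alpha_{a,b,j}^2/? $ — here one uses the constraint $\frac1n\sum_j\sum_{a,b}\alpha_{a,b,j}^2=1$ to see that $\frac1n\sum_j(\text{pair value at }j)$ averages to $1$. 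Summing over the $(m-1)!!$ pairings of $m$ slots into $m/2$ pairs (the same combinatorial structure as Wick's theorem / Wigner's semicircle moment computation, but here each pairing is "non-crossing-free" and contributes $1$ rather than a Catalan-type weight, precisely because the $P_j$ and $P_{j'}$ for $j\ne j'$ \emph{commute} whenever their bonds are disjoint) yields exactly $(m-1)!!$. The uniform bound $|\alpha_{a,b,j}|<C$ is what makes all the $O(\cdot)$ estimates uniform in $n$ and guarantees the discarded terms (distinct bond-values $<m/2$, or matched-but-overlapping configurations where adjacent bonds share a site) are genuinely $o(1)$ after the $n^{-m/2}$ normalisation.

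The main obstacle I anticipate is the bookkeeping for \emph{overlapping} bonds: when $j'=j\pm1$ the operators $P_j$ and $P_{j'}$ share site $j+1$ (resp.\ $j$) and do not commute, so a naive "each value at least twice" count is not quite enough — one must check that configurations in which the support graph of the chosen bonds is connected but spans more than one pair still contribute at a lower order in $n$ (intuitively: a connected cluster of $r$ bonds covers $r+1$ sites but only gives $O(n)$ positional freedom, a loss of $n^{r-1}$ against the $n^{r/2}$ one would "want"), and that the boundary/wrap-around term $\sigma_n^{(a)}\sigma_1^{(b)}$ contributes nothing extra. Handling this cleanly — essentially showing the dominant contribution comes only from tuples that split into $m/2$ pairwise-disjoint matched pairs of bonds — is the crux; once that is established, the limiting even moments are $(m-1)!!$, the odd moments vanish, and weak convergence to the standard normal (hence the stated convergence of the distribution function at every $x$, the Gaussian CDF being continuous) follows from the classical moment-convergence theorem.
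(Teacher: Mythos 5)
Your proposal is correct in strategy but takes a genuinely different route from the paper. The paper does not use the method of moments: it splits $H_n$ into commuting blocks $b_k$ of $l(n)$ consecutive bonds plus a set of $\lceil n/l\rceil$ ``link'' terms $L$, shows via an integral identity and Cauchy--Schwarz that removing $L$ changes the characteristic function $\frac{1}{2^n}\Tr(\e^{\im tH_n})$ by $O(\sqrt{t^2\lceil n/l\rceil/n})\to0$, and then applies Lyapunov's central limit theorem (with a fourth-moment condition) to the factorised characteristic function of $B=\sum_k b_k$. That approach trades your Wick-type combinatorics for two much cruder estimates: the Hilbert--Schmidt norm of $L$ and the fourth moments $\frac{1}{2^n}\Tr(b_k^4)$, which only require counting how many index quadruples fail to contain two well-separated $h_j$'s. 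Your moment computation is more self-contained (no external CLT needed) and yields all limiting moments explicitly, at the price of the cluster bookkeeping you correctly identify as the crux.

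One imprecision worth fixing: your claim that ``each bond index value must occur at least twice'' is false as stated. Since $b\geq1$ but $a$ may be $0$, a bond $j$ occurring only once can still contribute if bond $j+1$ also occurs and supplies the cancelling Pauli at site $j+1$; for instance $\frac{1}{2^n}\Tr(P_jP_{j+1}P_{j+1})$ is generically nonzero (take the $\sigma_{j+1}^{(1)}$ term of $P_j$ against $\sigma_{j+2}^{(1)}$ and $\sigma_{j+1}^{(1)}\sigma_{j+2}^{(1)}$ from the two $P_{j+1}$'s). The correct invariant is the one in your ``main obstacle'' paragraph: group the occurring bonds into maximal connected clusters; each cluster must contain at least two slots (the rightmost site of a cluster is hit only by the $b$-component of its rightmost bond, which cannot cancel alone), the positional freedom of a configuration with $c$ clusters is $O(n^{c})$ regardless of how many distinct bond values it contains, and hence only $c=m/2$ survives, forcing $m$ even and every cluster to be a matched pair $P_j^2$ on a single bond with all pairs pairwise disjoint. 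With that substitution your argument closes: the $(m-1)!!$ pairings each contribute $n^{-m/2}\sum_{j^{(1)},\dots,j^{(m/2)}}\prod_p\bigl(\sum_{a,b}\alpha_{a,b,j^{(p)}}^2\bigr)\to1$ by the normalisation in (\ref{range}) and the uniform bound $|\alpha_{a,b,j}|<C$, and the moment-convergence theorem (the Gaussian being moment-determinate) gives the stated weak convergence.
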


It is noted that this theorem holds, in particular, for translationally invariant Hamiltonians.

The proof of this theorem follows closely a calculation in \cite{Mahler,Mahler2005}. There it was shown that for many Hamiltonians $H_n$ and product states $|\phi\rangle$, over the $n$ qubits, that $\Tr_{\mathcal{AB}}\left(\e^{\im tH_n}|\phi\rangle\langle\phi|\right)$ tends to the characteristic function of some scaled and shifted Gaussian random variable.  A modified version of this calculation is given in Section \ref{Th1proof} to prove Theorem \ref{DOS}.  This modification leads to slightly more general constraints within the proof which carry forward to the generalisation to more elaborate interaction geometries, given thereafter.  This modified calculation also highlights the similarity to the related random matrix calculation, which this extends, given in our previous paper \cite{KLW2014_2}.

\section{Proof of Theorem \ref{th2} and extended results}\label{Th2proof}
The proof of Theorem \ref{th2} will now be given:
\begin{proof}
The proof is in two parts.  First the Hermitian matrix $\rho_{k,\mathcal{A}}$ will be expanded over an orthonormal Pauli matrix basis.  The non-identity coefficients in this expansion will then be bounded using the translational invariance of $G_n^{\lab}$, so that the desired trace can be calculated.

The space of $2^l\times2^l$ Hermitian matrices admits the orthonormal basis
\begin{equation}
  \left\{\sigma^{( a_1 )}\otimes\sigma^{( a_2 )}\otimes\dots\otimes\sigma^{( a_l )}\,:\,a_1,a_2,\dots,a_l=0,1,2,3\right\}
\end{equation}
with respect to the (scaled) Hilbert-Schmidt inner product $(A,B)=\frac{1}{2^l}\Tr_{\mathcal{A}}(AB^\dagger)$.  As such, the reduced density matrices $\rho_{k,\mathcal{A}}$ have the decomposition
\begin{equation}
  \rho_{k,\mathcal{A}}=\sum_{a_1,\dots,a_l=0}^3\frac{1}{2^l}\Tr_{\mathcal{A}}\Big(\sigma^{( a_1 )}\otimes\dots\otimes\sigma^{( a_l )}\Tr_\mathcal{B}(|\psi_k\rangle\langle\psi_k|)\Big)\sigma^{( a_1 )}\otimes\dots\otimes\sigma^{( a_l )}
\end{equation}
Combining the trace over the first $l$ qubits and the trace over the last $n-l$ qubits reduces this to
\begin{equation}
  \sum_{a_1,\dots,a_l=0}^3\frac{1}{2^l}\Tr_{\mathcal{AB}}\Big(\sigma_{  1  }^{( a_1 )}\dots\sigma_{  l  }^{( a_l )}|\psi_k\rangle\langle\psi_k|\Big)\sigma^{( a_1 )}\otimes\dots\otimes\sigma^{( a_l )}
\end{equation}
or equivalently
\begin{equation}\label{exp}
  \rho_{k,\mathcal{A}}=\frac{1}{2^l}\sum_{a_1,\dots,a_l=0}^3\langle\psi_k|\sigma_{  1  }^{( a_1 )}\dots\sigma_{  l  }^{( a_l )}|\psi_k\rangle\sigma^{( a_1 )}\otimes\dots\otimes\sigma^{( a_l )}
\end{equation}

It now remains to bound the coefficients $\langle\psi_k|\sigma_{  1  }^{( a_1 )}\dots\sigma_{  l  }^{( a_l )}|\psi_k\rangle$.  By construction, $T$ is unitary and $T|\psi_k\rangle=\e^{\im\theta_k}|\psi_k\rangle$ for some $\theta_k\in[0,2\pi)$.  Therefore,
\begin{align}\label{M}
  \langle\psi_k|\sigma_{  1  }^{( a_1 )}\dots\sigma_{  l  }^{( a_l )}|\psi_k\rangle
  &=\frac{1}{n}\sum_{j=0}^{n-1}\langle\psi_k|T^{j}\sigma_{  1  }^{( a_1 )}\dots\sigma_{  l  }^{( a_l )}T^{-j}|\psi_k\rangle=\frac{1}{\sqrt{n}}\langle\psi_k|M(\boldsymbol{a})|\psi_k\rangle
\end{align}
where $M(\boldsymbol{a})$, for $\boldsymbol{a}=(a_1,\dots,a_l)$, is the Hermitian matrix on $n$ qubits
\begin{equation}
  M(\boldsymbol{a})=\frac{1}{\sqrt{n}}\sum_{j=0}^{n-1}T^{j}\sigma_{  1  }^{( a_1 )}\dots\sigma_{  l  }^{( a_l )}T^{-j}=\frac{1}{\sqrt{n}}\sum_{j=0}^{n-1}\sigma_{  1+j  }^{( a_1 )}\dots\sigma_{  l+j  }^{( a_l )}
\end{equation}
Excluding the case where $\boldsymbol{a}=\boldsymbol{0}$, if $2l<n$
\begin{equation}\label{bound2}
  \Tr_{\mathcal{AB}}(M(\boldsymbol{a})M(\boldsymbol{a})^\dagger)=\Tr_{\mathcal{AB}}\left(M(\boldsymbol{a})^2\right)=\frac{1}{n}\Tr_{\mathcal{AB}}\left(\left(\sum_{j=0}^{n-1}\sigma_{  1+j  }^{( a_1 )}\dots\sigma_{  l+j  }^{( a_l )}\right)^2\right)=\frac{1}{n}\Tr_{\mathcal{AB}}\left(\sum_{j=0}^{n-1}I_{2^n}\right)=2^n
\end{equation}
as all the off-diagonal terms, in the square above, have zero trace by the orthonormality property of the Pauli matrix basis.  It is important to note that all the operators $\sigma_{  1+j  }^{( a_1 )}\dots\sigma_{  l+j  }^{( a_l )}$ for $j=0,\dots,n-1$ within the square above are unique as $2l<n$ and $\boldsymbol{a}\neq\boldsymbol{0}$.  This follows as the operators are only supported on $l$ consecutive qubits.

We also note that later in the paper prime values of $n$ will be required in order to determine the simplicity of certain eigenvalues.  At this point though neither prime values of $n$ nor eigenvalue simplicity are required.

These facts may now be combined to complete the proof.  By equations (\ref{exp}) and (\ref{M}) and the orthonormality property of the Pauli matrix basis,
\begin{equation}
  \Tr_{\mathcal{A}}\left(\rho_{k,\mathcal{A}}^2\right)=\sum_{a_1,\dots,a_l=0}^3\frac{1}{2^l}\langle\psi_k|\sigma_{  1  }^{( a_1 )}\dots\sigma_{  l  }^{( a_l )}|\psi_k\rangle^2=\sum_{a_1,\dots,a_l=0}^3\frac{1}{n2^l}\langle\psi_k|M(\boldsymbol{a})|\psi_k\rangle^2
\end{equation}
Since $\sum_k\langle\psi_k|M(\boldsymbol{a})|\psi_k\rangle^2\leq\Tr_{\mathcal{AB}}(M(\boldsymbol{a})M(\boldsymbol{a})^\dagger)=\Tr_{\mathcal{AB}}\left(M(\boldsymbol{a})^2\right)$, it now follows from equation (\ref{bound2}) that
\begin{align}
  \sum_{k=1}^{2^n}\Tr_{\mathcal{A}}\left(\rho_{k,\mathcal{A}}^2\right)
  &=\sum_{k=1}^{2^n}\frac{1}{2^l}\langle\psi_k|\psi_k\rangle^2+\sum_{k=1}^{2^n}\sum_{\genfrac{}{}{0pt}{}{a_1,\dots,a_l=0}{\text{not all zero}}}^3\frac{1}{n2^l}\langle\psi_k|M(\boldsymbol{a})|\psi_k\rangle^2\leq\frac{2^n}{2^l}+\frac{2^l2^n}{n}
\end{align}
Dividing through by $2^n$ and recalling that $\Tr_{\mathcal{A}}\left(\rho_{k,\mathcal{A}}^2\right)\geq\frac{1}{2^l}$ for each $k$, as $\rho_{k,\mathcal{A}}$ is a $2^l$ dimensional density matrix, completes the proof.
\end{proof}

\subsection{Extension to qudits and higher dimensional lattices}
This proof relies on the fact that an orthogonal basis $\{B_i\}$ for Hermitian operators supported on subsystem $\mathcal{A}$ can be translated along the chain many times by a translation operator $T$ with
\begin{equation}
 \langle\psi_k|B_i|\psi_k\rangle=\langle\psi_k|T^{j}B_iT^{-j}|\psi_k\rangle
\end{equation}
for all $j=0,\dots,n-1$, as used in equation (\ref{M}).  This situation is not restricted to a one dimensional system of qubits.  A similar result will hold for qudits or higher dimensional systems, so long as there exists a suitable translation symmetry in the system so that an analogous identity to (\ref{M}) holds.

\subsection{Eigenstate uniqueness}\label{uniqueness}
The eigenstates of matrices with a non-degenerate spectrum are unique up to phase.  In this case Theorem \ref{th2} becomes a theorem describing the unique (up to phase) eigenstates of $G_n^{\lab}$.  The following two lemmas shows that this is the case for most matrices $H_n^{\lab}$ (as particular examples of Hamiltonians $G_n^{\lab}$ which only include nearest-neighbour interactions) when $n$ is an odd prime:

\begin{lemma}\label{dist}
	For odd prime values of $n$, there exists some $\epsilon\in\mathbb{R}$ such that the matrix
	\begin{equation}
		H_{n}^{(\epsilon XY+Z)}=\sum_{j=1}^n\left(\epsilon\sigma_{  j  }^{(1)}\sigma_{  j+1  }^{(2)}+\sigma_{  j  }^{(3)}\right)
	\end{equation}
	has a non-degenerate spectrum.
\end{lemma}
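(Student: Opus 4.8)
The plan is to diagonalise $H_n^{(\epsilon XY+Z)}$ explicitly as a free-fermion chain and then to show that for all but finitely many real $\epsilon$ its $2^n$ eigenvalues are distinct. First observe that $H_n^{(\epsilon XY+Z)}=\sum_{j}\sigma_j^{(3)}+\epsilon\sum_{j}\sigma_j^{(1)}\sigma_{j+1}^{(2)}$ depends affinely on $\epsilon$, so the coefficients of its characteristic polynomial, and hence its discriminant $D(\epsilon)$, are polynomials in $\epsilon$; the matrix has a repeated eigenvalue exactly when $D(\epsilon)=0$, so it suffices to prove $D\not\equiv 0$. Equivalently, regarding $\epsilon$ as an indeterminate, it suffices to exhibit $2^n$ distinct eigenvalues of $H_n^{(\epsilon XY+Z)}$ in $\overline{\mathbb{C}(\epsilon)}$.

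Second, I would perform the diagonalisation. A Jordan--Wigner transformation sends $\sigma_j^{(3)}\mapsto 1-2c_j^\dagger c_j$ and $\sigma_j^{(1)}\sigma_{j+1}^{(2)}$ to an expression quadratic in the fermions $c_j$ (a hopping term together with a pairing term), the wrap-around bond being quadratic within each eigenspace of the total fermion parity $\prod_j\sigma_j^{(3)}$ (the two parities corresponding to periodic and antiperiodic fermions). After a Fourier transform and a Bogoliubov rotation in each momentum block $\{\phi,-\phi\}$, the eigenvalues in a fixed parity sector take the form $\sum_\mu\eta_\mu\,\varepsilon_\mu(\epsilon)$ over sign patterns $\eta_\mu\in\{\pm 1\}$ with $\prod_\mu\eta_\mu$ determined by the sector, where each quasi-particle energy $\varepsilon_\mu(\epsilon)$ is an explicit algebraic function of the shape $r_\phi\,\epsilon+\sqrt{1+s_\phi\,\epsilon^2}$ with $r_\phi\propto\sin\phi$ and $s_\phi\propto\sin^2\phi$; within a pair $\{\phi,-\phi\}$ the square-root part is common but the linear parts have opposite signs, and the modes at $\phi=0$ (present for periodic fermions) and $\phi=\pi$ (present for antiperiodic fermions, since $n$ is odd) are special, with energy pinned to $1$. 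It is essential that $\sigma^{(1)}\sigma^{(2)}$ is not spatially reflection-symmetric, so that the two modes in a pair genuinely differ for $\epsilon\neq 0$; I would verify this explicitly, since otherwise a two-fold degeneracy would survive for every $\epsilon$ and the lemma would be false.

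Third, I would prove that the $2^n$ eigenvalue functions are pairwise distinct. Two of them coinciding (within one sector or across the two) forces a relation $\sum_\mu\delta_\mu\,\varepsilon_\mu(\epsilon)\equiv 0$, summed over all quasi-particle modes of both sectors, with coefficients $\delta_\mu\in\{-1,0,1\}$ not all zero. Grouping the square-root parts by the value of $s_\phi$: each polynomial $1+s_\phi\epsilon^2$ is squarefree, distinct values of $s_\phi$ give disjoint root sets, and since $n$ is odd the numbers $\sin^2(2\pi k/n)$ and $\sin^2(\pi(2k+1)/n)$ coincide only in the obvious way, so the distinct functions $\sqrt{1+s\epsilon^2}$, together with $1$, are linearly independent over $\mathbb{C}(\epsilon)$ (no non-empty product of distinct $1+s\epsilon^2$ is a square there). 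Hence every grouped square-root coefficient and the coefficient of $1$ must vanish: the special modes drop out, and each pair satisfies $\delta^{(1)}_\phi+\delta^{(2)}_\phi=0$. What remains is a relation on the linear parts, $\sum_\phi(\delta^{(1)}_\phi-\delta^{(2)}_\phi)\,r_\phi=0$, that is a $\mathbb{Q}$-linear relation with bounded integer coefficients among the numbers $\sin(m\pi/n)$, $1\le m\le(n-1)/2$. A cyclotomic computation --- writing $\sin(m\pi/n)=\tfrac{1}{2i}(\zeta_{2n}^{m}-\zeta_{2n}^{-m})$ and using $[\mathbb{Q}(\zeta_{2n}):\mathbb{Q}]=n-1$ --- shows these are $\mathbb{Q}$-linearly independent precisely because $n$ is an odd \emph{prime}; for composite odd $n$ such relations exist, e.g.\ $\sin(\pi/9)+\sin(2\pi/9)-\sin(4\pi/9)=0$, so primality is genuinely needed. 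Therefore all $\delta_\mu=0$, $D\not\equiv 0$, and the lemma follows. (As a check, $0$ is not an eigenvalue for generic $\epsilon$: at $\epsilon=0$ the spectrum is $\{n-2w:0\le w\le n\}$, which avoids $0$ because $n$ is odd; thus the $\lambda\leftrightarrow-\lambda$ symmetry coming from conjugation by $\prod_j\sigma_j^{(1)}$ does not by itself force any degeneracy.)

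The main obstacle is this last, number-theoretic, step --- establishing the $\mathbb{Q}$-linear independence of the relevant sines of rational multiples of $\pi$ for odd prime $n$ --- together with the bookkeeping needed to handle the two fermion-parity sectors simultaneously: their momentum sets differ, this creates coincidences among the $s_\phi$ across sectors, and one must check that no cross-sector collision of eigenvalues slips through. The Jordan--Wigner/Bogoliubov diagonalisation is standard, and the opening reduction to $D\not\equiv 0$ is routine.
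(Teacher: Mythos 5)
Your overall strategy is the same as the paper's: diagonalise $H_n^{(\epsilon XY+Z)}$ by Jordan--Wigner into single-mode energies of the form $\epsilon\sin\phi\pm\sqrt{1+\epsilon^2\sin^2\phi}$, and then reduce non-degeneracy for some $\epsilon$ to the $\mathbb{Q}$-linear independence of the relevant sines, which holds precisely because $n$ is prime (the paper invokes the linear independence of the non-zero powers of $\omega=\e^{2\pi\im/n}$ over the integers, equivalent to your cyclotomic degree count). Your mechanism for distinctness is genuinely different and, within a single parametrisation, cleaner: the paper compares Taylor coefficients at orders $\epsilon^0$, $\epsilon^1$ and $\epsilon^2$, whereas you use linear independence of the functions $\sqrt{1+s\epsilon^2}$ (for distinct $s$) together with $1$ over $\mathbb{C}(\epsilon)$ to kill the square-root parts in one stroke, leaving only the sine relation. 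That works: the quadratics $1+s\epsilon^2$ are squarefree and pairwise coprime, so no non-empty product is a square, and the argument recovers $d_j+d_{n-j}=0$, $d_n=0$, and then $d_j=0$ from sine independence.

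The genuine gap is exactly the point you defer: the cross-sector collisions. Because $n$ is odd, the multisets $\{\sin^2(2\pi k/n)\}_k$ (periodic fermions) and $\{\sin^2(\pi(2k+1)/n)\}_k$ (antiperiodic fermions) \emph{coincide}, so your independence machinery alone cannot separate the two sectors: for instance, the configuration with all $\eta_\mu=+1$ gives the identical function $-\sum_\phi\sqrt{1+\epsilon^2\sin^2\phi}$ of $\epsilon$ in both sectors, and more generally any sign pattern that is even on each $\{\phi,-\phi\}$ pair is reproduced exactly in the other sector. Only the fermion-parity selection rule (which half of the $2^n$ sign patterns is physical in each sector, depending on the parity of the Bogoliubov vacuum) can exclude these, and you invoke that constraint without using it; as written, an all-$\epsilon$ degeneracy could slip through. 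The paper avoids this entirely: it diagonalises only the $\eta=-1$ block with periodic fermions, and then uses the operator $\nu=\prod_j\sigma_j^{(1)}$, which anticommutes with both $H_n^{(\epsilon XY+Z)}$ and $\prod_j\sigma_j^{(3)}$ when $n$ is odd, to conclude that the $\eta=+1$ spectrum is the negative of the $\eta=-1$ spectrum; since negation corresponds to flipping all bits $\boldsymbol{x}\mapsto\boldsymbol{1}-\boldsymbol{x}$ (which changes the parity of $\sum_jx_j$ because $n$ is odd), the \emph{entire} spectrum is $\{\lambda_{\boldsymbol{x}}:\boldsymbol{x}\in\{0,1\}^n\}$ with a single set of mode energies $\mu_j=\sin(2\pi j/n)$, and one distinctness argument covers all $2^n$ eigenvalues, cross-sector pairs included. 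Adopting this symmetry step would close your gap and let your $\mathbb{C}(\epsilon)$-independence argument finish the proof.
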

\begin{proof}
Let $n$ be an odd prime.  The $2^n$ eigenvalues of $H_{n}^{(\epsilon XY+Z)}$ are given by
\begin{equation}
	\lambda_{\boldsymbol{x}}=\sum_{j=1}^n(2x_j-1)\left(\epsilon\mu_j-\sqrt{\epsilon^2\mu_j^2+1}\right),\qquad\qquad\mu_j=\sin\left(\frac{2\pi j}{n}\right)
\end{equation}
for the multi-index $\boldsymbol{x}=(x_1,\dots,x_n)\in\{0,1\}^n$, as seen in Appendix \ref{JW}.

First, it will be shown that for odd prime values of $n$ the values $\{\mu_j\}_{j=1}^{\frac{n-1}{2}}$ are linearly independent over the integers.  With $\omega=\e^{\frac{2\pi\im}{n}}$,
\begin{equation}
	\mu_j=\sin\left(\frac{2\pi j}{n}\right)=\frac{\omega^j-\omega^{-j}}{2\im}
\end{equation}
so that for integers $a_j$
\begin{align}
	\sum_{j=1}^{\frac{n-1}{2}}a_j\mu_j
	&=\frac{1}{2\im}\sum_{j=1}^{\frac{n-1}{2}}a_j\left(\omega^j-\omega^{-j}\right)
	=\frac{1}{2\im}\sum_{j=1}^{n-1}b_j\omega^j
\end{align}
where
\begin{equation}
	b_j=
	\begin{cases}
		a_j\qquad\quad&\text{if }j=1,\dots,\frac{n-1}{2}\\
		-a_{n-j}&\text{if }j=\frac{n+1}{2},\dots,n-1
	\end{cases}
\end{equation}
The non-zero powers of $\omega$ are linearly independent over the integers \cite[Lemma 2.11]{Lenstra1979}; hence
\begin{equation}
	\sum_{j=1}^{n-1}b_j\omega^j=0\quad\iff\quad b_j=0\,\,\forall j\quad\iff\quad a_j=0\,\,\forall j
\end{equation}
from which it is concluded that the $\{\mu_j\}_{j=1}^{\frac{n-1}{2}}$ are linearly independent over the integers.

For small $\epsilon$, the eigenvalues $\lambda_{\boldsymbol{x}}(\epsilon)$ admit the expansion
\begin{equation}
	\lambda_{\boldsymbol{x}}(\epsilon)=\sum_{j=1}^n(2x_j-1)\left(-1+\epsilon\mu_j-\epsilon^2\frac{\mu_j^2}{2}+O\left(\epsilon^4\right)\right)
\end{equation}
Suppose, for a contradiction, that two eigenvalues are equal in some neighbourhood of $\epsilon=0$, that is for $\boldsymbol{x}\neq\boldsymbol{y}$,
\begin{equation}
	0=\lambda_{\boldsymbol{x}}(\epsilon)-\lambda_{\boldsymbol{y}}(\epsilon)
	=-2\sum_{j=1}^n(x_j-y_j)+2\epsilon\sum_{j=1}^n\mu_j(x_j-y_j)-\epsilon^2\sum_{j=1}^n\mu_j^2(x_j-y_j)+O\left(\epsilon^4\right)
\end{equation}
Comparing the $\epsilon^0$ coefficient (and setting $d_j=x_j-y_j$) gives
\begin{equation}
	0=\sum_{j=1}^nd_j
\end{equation}
Comparing the $\epsilon^1$ coefficient gives
\begin{equation}
	0=\sum_{j=1}^n\mu_jd_j=\sum_{j=1}^{\frac{n-1}{2}}\sin\left(\frac{2\pi j}{n}\right)\left(d_j-d_{n-j}\right)
\end{equation}
as $\mu_n=0$ and $\mu_j=-\mu_{n-j}$ for $j=1,\dots,\frac{n-1}{2}$.  By the linear independence of the $\{\mu_j\}_{j=1}^{\frac{n-1}{2}}$ over the integers, this implies that $d_j=d_{n-j}$ for all $j=1,\dots,\frac{n-1}{2}$.  In particular, substituting this into the $\epsilon^0$ result gives
\begin{equation}
  0=2\sum_{j=1}^{\frac{n-1}{2}}d_j+d_n
\end{equation}
from which, since the first term in even and the second term is either $-1$, $0$ or $1$, implies that $d_n=0$.
Comparing the $\epsilon^2$ coefficient gives
\begin{equation}
  0=\sum_{j=1}^n\mu_j^2d_j=\sum_{j=1}^{n-1}\mu_j^2d_j=\frac{1}{(2\im)^2}\sum_{j=1}^{n-1}(\omega^j-\omega^{-j})^2d_j
  =-\frac{1}{4}\sum_{j=1}^{n-1}(\omega^{2j}+\omega^{-{2j}})d_j+\frac{1}{4}\sum_{j=1}^{n-1}2d_j
\end{equation}
It has already been seen that the second term in this last expression is zero from the $\epsilon^0$ result, therefore
\begin{equation}
  0=-\frac{1}{4}\sum_{j=1}^{n-1}(\omega^{2j}+\omega^{-{2j}})d_j=-\frac{1}{4}\sum_{j=1}^{n-1}\omega^{2j}\left(d_j+d_{n-j}\right)
\end{equation}
(by transforming $j\to n-j$ in $\omega^{-2j}d_j$).  Now by the linear independence of the non-zero powers of $\omega$ over integers, $d_j=-d_{n-j}$ for all $j=1,\dots,n-1$.  It has already been seen that $d_n=0$ and that $d_j=d_{n-j}$ for all $j=1,\dots,\frac{n-1}{2}$ so that it is concluded that $d_j=0$ for all $j$.  That is $\boldsymbol{x}=\boldsymbol{y}$, a contradiction.  Therefore there must exist some $\epsilon$ for which $H_n^{(\epsilon XY+Z)}$ has a non-degenerate spectrum by the analyticity of $\lambda_{\boldsymbol{x}}(\epsilon)$ for $\epsilon>0$.
\end{proof}

\begin{lemma}\label{exist}
  For odd prime values of $n$ the matrix
  \begin{equation}
    H_n^{\lab}(\boldsymbol{\alpha})=\frac{1}{\sqrt{n}}\sum_{j=1}^n\sum_{a=0}^3\sum_{b=1}^3\alpha_{a,b}\sigma_j^{(a)}\sigma_{j+1}^{(b)}
  \end{equation}
  for $\boldsymbol{\alpha}=(\alpha_{0,1},\dots,\alpha_{3,3})\in\mathbb{R}^{12}$, generically has a non-degenerate spectrum.
\end{lemma}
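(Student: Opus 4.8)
The plan is to reduce the statement to the elementary fact that a non-zero polynomial on $\mathbb{R}^{12}$ vanishes only on a set of measure zero, using the discriminant of the characteristic polynomial to detect degeneracy and using Lemma \ref{dist} to guarantee that this discriminant is not identically zero.

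First I would observe that, for $n$ fixed, the entries of the $2^n\times 2^n$ matrix $H_n^{\lab}(\boldsymbol{\alpha})$ are linear functions of the twelve real parameters $\boldsymbol{\alpha}$. Hence the coefficients of the characteristic polynomial $p_{\boldsymbol{\alpha}}(\lambda)=\det\big(\lambda I_{2^n}-H_n^{\lab}(\boldsymbol{\alpha})\big)$ are polynomials in $\boldsymbol{\alpha}$; since $H_n^{\lab}(\boldsymbol{\alpha})$ is Hermitian whenever $\boldsymbol{\alpha}\in\mathbb{R}^{12}$ these coefficients are real-valued, so they are genuine real polynomials (a complex polynomial that is real-valued on all of $\mathbb{R}^{12}$ must have real coefficients). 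Substituting them into the universal discriminant polynomial, which has integer coefficients, produces a real polynomial $\Delta(\boldsymbol{\alpha})$ with the property that the $2^n$ eigenvalues of the Hermitian matrix $H_n^{\lab}(\boldsymbol{\alpha})$ are pairwise distinct --- i.e. its spectrum is non-degenerate --- if and only if $\Delta(\boldsymbol{\alpha})\neq 0$.

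The key step is then to exhibit a single point at which $\Delta$ is non-zero, and this is precisely what Lemma \ref{dist} supplies: using translational invariance to rewrite $\sum_{j}\sigma_j^{(3)}=\sum_j\sigma_j^{(0)}\sigma_{j+1}^{(3)}$, we have $H_n^{(\epsilon XY+Z)}=\sqrt{n}\,H_n^{\lab}(\boldsymbol{\alpha}^{\star})$ where $\boldsymbol{\alpha}^{\star}$ has $\alpha^{\star}_{1,2}=\epsilon$, $\alpha^{\star}_{0,3}=1$ and all remaining components zero; multiplying by the scalar $\sqrt{n}$ does not change whether eigenvalues coincide, so for the value of $\epsilon$ produced by Lemma \ref{dist} the spectrum of $H_n^{\lab}(\boldsymbol{\alpha}^{\star})$ is simple and therefore $\Delta(\boldsymbol{\alpha}^{\star})\neq 0$. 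In particular $\Delta$ is not the zero polynomial, so its zero set is a proper algebraic subvariety of $\mathbb{R}^{12}$; this set is closed, has empty interior and Lebesgue measure zero, and on its open, dense, full-measure complement $H_n^{\lab}(\boldsymbol{\alpha})$ has a non-degenerate spectrum. This is what is meant by ``generically''.

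I expect the only point requiring genuine care to be the first one --- confirming that the characteristic-polynomial coefficients, and hence $\Delta$, really are polynomial in the real parameters $\boldsymbol{\alpha}$ even though $H_n^{\lab}(\boldsymbol{\alpha})$ is complex because of the $\sigma^{(2)}$ contributions. The rest is the standard ``a non-identically-zero polynomial is generically non-zero'' argument, combined with the single explicit non-degenerate example already secured in Lemma \ref{dist}; note that this route makes it unnecessary to analyse by hand which symmetries of $H_n^{\lab}(\boldsymbol{\alpha})$ might conspire to force degeneracies at special parameter values.
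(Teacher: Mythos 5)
Your proposal is correct and is essentially the paper's argument: both reduce the claim to the fact that the discriminant of $H_n^{\lab}(\boldsymbol{\alpha})$ is a polynomial in $\boldsymbol{\alpha}$, non-vanishing at the point supplied by Lemma \ref{dist}, and hence non-zero off a Lebesgue-null algebraic set. The only cosmetic difference is that the paper realises the discriminant as $\det(V^\dagger V)$ with entries $\Tr\big((H_n^{\lab}(\boldsymbol{\alpha}))^{j+k-2}\big)$, which makes its polynomial dependence on $\boldsymbol{\alpha}$ (with real coefficients) immediate, whereas you obtain the same polynomial via the universal discriminant of the characteristic polynomial and a short reality argument.
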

\begin{proof}
  Let $V$ be the $2^n\times2^n$ Vandermonde matrix with elements $V_{j,k}=\lambda_j^{k-1}$ for the eigenvalues $\lambda_1,\dots,\lambda_{2^n}$ of $H_n^{\lab}$.  Then by the properties of the Vandermonde matrix
  \begin{equation}\label{eq1}
    \det_{1\leq j,k\leq2^n}\left(V^\dagger V\right)=\prod_{1\leq j<k\leq2^n}\left(\lambda_k-\lambda_j\right)^2
  \end{equation}
  and also, by direct calculation,
  \begin{equation}\label{eq2}
    \left(V^\dagger V\right)_{j,k}=\Tr\left(\left(H_n^{\lab}(\boldsymbol{\alpha})\right)^{j+k-2}\right)
  \end{equation}
  The function
  \begin{equation}
    \det_{1\leq j,k\leq2^n}\left(V^\dagger V\right)
  \end{equation}
  is then a polynomial in the elements of $\boldsymbol{\alpha}$ by (\ref{eq2}) and is zero iff $H_n^{\lab}$ has at least one repeated eigenvalue by (\ref{eq1}).  Lemma \ref{dist} shows that there exists some $\boldsymbol{\alpha}_0$ such that this polynomial is non-zero.  Hence the Lebesgue measure of the zeros of this polynomial must be zero, completing the proof.
\end{proof}

\subsection{Non-translationally invariant Hamiltonians}
An exact result for each eigenstate is possible in the $l=1$ case for some particular instances of the matrices $H_n$ with non-degenerate spectrum:
\begin{theorem}[Eigenstate purity on one qubit for some matrices $H_n$]\label{th3}
  Let
  \begin{equation}
    {H}_n^{(pair)}=\sum_{j=1}^n\sum_{a,b=1}^3\alpha_{a,b,j}\sigma_{  j  }^{(a)}\sigma_{  j+1  }^{(b)}=\sum_{k=1}^{2^n}\lambda_k|\psi_k\rangle\langle\psi_k|
  \end{equation}
  for any $\alpha_{a,b,j}\in\mathbb{R}$ and where $\{|\psi_k\rangle\,:\,k=1,\dots,2^n\}$ are eigenstates of $H_n^{(pair)}$ with the corresponding eigenvalues $\lambda_1<\lambda_2<\dots<\lambda_{2^n}$ (the absence of local terms proportional to $\sigma_j^{(a)}$ should be noted).  Then the reduced density matrices $\rho_{k,\mathcal{A}}=\Tr_\mathcal{B}\left(|\psi_k\rangle\langle\psi_k|\right)$ satisfy,
\begin{equation}
  \Tr_{\mathcal{A}}\left(\rho_{k,\mathcal{A}}^2\right)=\frac{1}{2}
\end{equation}
where $\mathcal{A}$ is the Hilbert space of the qubit labelled $(1)$ and $\mathcal{B}$ is the Hilbert space of the remaining qubits.
\end{theorem}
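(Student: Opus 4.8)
The plan is to reduce, via the Pauli expansion already used for Theorem \ref{th2}, to the vanishing of the mixed moments $\Tr\big((H_n^{(pair)})^m\sigma_r^{(c)}\big)$, and then to prove that by a cancellation argument.

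First I would specialise equation (\ref{exp}) to $l=1$, which gives $\rho_{k,\mathcal A}=\tfrac12\sum_{c=0}^3\langle\psi_k|\sigma_1^{(c)}|\psi_k\rangle\,\sigma^{(c)}$ and hence, exactly as in the proof of Theorem \ref{th2},
\begin{equation*}
  \Tr_{\mathcal A}\!\big(\rho_{k,\mathcal A}^2\big)
  =\frac12\sum_{c=0}^3\langle\psi_k|\sigma_1^{(c)}|\psi_k\rangle^2
  =\frac12+\frac12\sum_{c=1}^3\langle\psi_k|\sigma_1^{(c)}|\psi_k\rangle^2 .
\end{equation*}
So the claim is equivalent to $\langle\psi_k|\sigma_1^{(c)}|\psi_k\rangle=0$ for every $k$ and every $c\in\{1,2,3\}$. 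This is the only place the non-degeneracy hypothesis enters: since the $\lambda_k$ are distinct, the interpolation polynomial $p_k$ of degree $2^n-1$ with $p_k(\lambda_j)=\delta_{jk}$ satisfies $|\psi_k\rangle\langle\psi_k|=p_k\big(H_n^{(pair)}\big)$, so $\langle\psi_k|\sigma_1^{(c)}|\psi_k\rangle=\Tr\big(p_k(H_n^{(pair)})\,\sigma_1^{(c)}\big)$ is a linear combination of the moments $\Tr\big((H_n^{(pair)})^m\sigma_1^{(c)}\big)$, $0\le m\le 2^n-1$. Hence it suffices to prove $\Tr\big((H_n^{(pair)})^m\sigma_r^{(c)}\big)=0$ for all $m\ge0$, all sites $r$, and all $c\in\{1,2,3\}$ (the site is arbitrary by relabelling; this already gives the stronger statement that every single-qubit marginal of every eigenstate is maximally mixed).

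To prove the moments vanish I would expand $(H_n^{(pair)})^m$ over words in the two-site terms $P_t=\sigma_j^{(a)}\sigma_{j+1}^{(b)}$ and group words by their underlying multiset of letters. The contribution of a multiset is $\big(\prod_i\alpha_{t_i}\big)\Tr\!\big(\big(\sum_{\text{orderings}}P_{t_{\sigma(1)}}\cdots P_{t_{\sigma(m)}}\big)\sigma_r^{(c)}\big)$, and the trace is non-zero only when the product of the letters is proportional to the \emph{weight-one} operator $\sigma_r^{(c)}$ --- a condition insensitive to the order, since reordering Pauli strings only changes an overall phase. The crux is then a structural lemma: a multiset of two-site terms (each of weight two, supported on an edge of the ring and non-trivial at both its endpoints) whose product is proportional to $\sigma_r^{(c)}$ must contain a pair of letters that anticommute --- intuitively, site $r$ lies on only two edges of the ring, so collapsing the product to weight one at $r$ while making it the identity everywhere else forces two overlapping two-site factors to disagree in their single-qubit Pauli at a shared site, and two such terms anticommute. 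Given an anticommuting pair of letters, the signed sum over the orderings of that multiset vanishes by a sign-reversing involution (pair each ordering with the one obtained by interchanging the positions of the two letters of that pair), and summing over multisets gives $\Tr\big((H_n^{(pair)})^m\sigma_r^{(c)}\big)=0$.

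The reduction is routine; the substance is the last paragraph. I expect the main obstacle to be exactly there: proving the structural lemma that every weight-one-product multiset of two-site terms contains an anticommuting pair, and then arranging the cancellation so that the pairing of orderings is a genuine fixed-point-free, sign-reversing involution --- the sign of interchanging two letters can depend on the letters lying between them and on repetitions, so one must choose the anticommuting pair canonically and exploit the special structure of weight-one-product multisets to keep control of it.
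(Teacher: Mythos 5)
Your reduction is sound: specialising (\ref{exp}) to $l=1$ and using non-degeneracy via the interpolation polynomial $p_k$ with $|\psi_k\rangle\langle\psi_k|=p_k\big(H_n^{(pair)}\big)$ correctly reduces the theorem to showing $\Tr\big((H_n^{(pair)})^m\sigma_r^{(c)}\big)=0$ for all $m$ and all $c\in\{1,2,3\}$. This is a genuinely different use of the non-degeneracy hypothesis from the paper, which instead observes that $S=(\sigma^{(2)})^{\otimes n}$ satisfies $\overline{SH_n^{(pair)}S}=H_n^{(pair)}$ and $\overline{S\sigma_1^{(a)}S}=-\sigma_1^{(a)}$, so that simplicity of $\lambda_k$ forces $\overline{S|\psi_k\rangle}=e^{i\theta_k}|\psi_k\rangle$ and hence $\langle\psi_k|\sigma_1^{(a)}|\psi_k\rangle=-\overline{\langle\psi_k|\sigma_1^{(a)}|\psi_k\rangle}=0$.

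The gap is in your cancellation step, and it is not merely the sign-bookkeeping issue you flag at the end: the lemma you rely on is false. A multiset of letters containing an anticommuting pair need not have vanishing sum over orderings. Take the multiset $\{X,X,Y\}$ on a single site: it contains the anticommuting pair $(X,Y)$ and its product is the weight-one operator $Y$, yet $XXY+XYX+YXX=Y-Y+Y=Y\neq0$; your involution would pair $XXY$ with $YXX$, which carry the \emph{same} phase, because transposing two anticommuting letters across an intermediate letter that anticommutes with exactly one of them preserves the phase. If mere existence of an anticommuting pair were the mechanism, the theorem would not need the hypothesis that $H_n^{(pair)}$ contains no one-site terms --- and with such terms the conclusion genuinely fails. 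Any correct version of your argument must use the fact that every letter has \emph{even} weight, and the clean way to do so is precisely the paper's operator: the map $M\mapsto\overline{SMS}$ is multiplicative, multiplies a Pauli string by $(-1)^{\mathrm{weight}}$, hence fixes every two-site letter and therefore every word $P_{t_1}\cdots P_{t_m}$, while sending the phase $\epsilon$ in $P_{t_1}\cdots P_{t_m}=\epsilon\,\sigma_r^{(c)}$ to $-\overline{\epsilon}$; thus $\epsilon=\pm i$, every word contributes a purely imaginary amount to $\Tr\big((H_n^{(pair)})^m\sigma_r^{(c)}\big)$, and since that trace is real (being a trace of a product of Hermitian matrices) it vanishes. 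At that point your structural lemma becomes unnecessary; note that it is in any case only asserted, not proved, and is less immediate than your intuition suggests, since two letters on the same edge can disagree at a shared site and still commute (e.g.\ $\sigma_j^{(1)}\sigma_{j+1}^{(1)}$ and $\sigma_j^{(2)}\sigma_{j+1}^{(2)}$).
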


\begin{proof}
As in the proof of Theorem \ref{th2}, see equation (\ref{exp}), the density matrix $\rho_{k,\mathcal{A}}$ may be written as
\begin{equation}\label{exp2}
  \rho_{k,\mathcal{A}}=\frac{1}{2}\sum_{a=0}^3\langle\psi_k|\sigma_{  1  }^{(a)}|\psi_k\rangle\sigma^{(a)}
\end{equation}

For the unitary operator $S={\sigma^{(2)}}^{\otimes n}$ and $a=1,2,3$, $j=1,\dots,n$ it can be shown that $S\sigma_j^{(a)}S=-\overline{\sigma_j^{(a)}}$, where the bar denotes complex conjugation of the matrix elements the standard basis as used in (\ref{Pauli}).  Therefore $SH_n^{(pair)}S=SH_n^{(pair)}S^\dagger=\overline{H_n^{(pair)}}$ as non-identity Pauli matrices only occur in pairs in $H_n^{(pair)}$ and all the coefficients $\alpha_{a,b,j}$ in $H_n^{(pair)}$ are real.  Then for the eigenstate $|\psi_k\rangle$ of $H_n^{(pair)}$ with eigenvalue of $\lambda_k$ it follows that
\begin{equation}
  \overline{H_n^{(pair)}}\Big(S|\psi_k\rangle\Big)=S H_n^{(pair)}|\psi_k\rangle=\lambda_k\Big(S|\psi_k\rangle\Big)
\end{equation}
so that both $|\psi_k\rangle$ and $\overline{S|\psi_k\rangle}$ are eigenstates of $H_n^{(pair)}$ with eigenvalue $\lambda_k$.  By assumption $\lambda_k$ is non-degenerate so that $\overline{S|\psi_k\rangle}=\e^{\im\theta_k}|\psi_k\rangle$ for some $\theta_k\in[0,2\pi)$.  Since $\overline{S^\dagger}\sigma_1^{(a)}\overline{S}=S^\dagger\sigma_1^{(a)}S=-\overline{\sigma_1^{(a)}}$, it then follows that for $a=1,2,3$,
\begin{equation}\label{sym}
\langle\psi_k|\sigma_{  1  }^{(a)}|\psi_k\rangle=\overline{\langle\psi_k|S^\dagger}\sigma_{  1  }^{(a)}\overline{S|\psi_k\rangle}=-\overline{\langle\psi_k|\sigma_{  1  }^{(a)}|\psi_k\rangle}
\end{equation}
As $\sigma_{  1  }^{(a)}$ is Hermitian, $\overline{\langle\psi_k|\sigma_{  1  }^{(a)}|\psi_k\rangle}=\langle\psi_k|\sigma_{  1  }^{(a)}|\psi_k\rangle$, and it is concluded that $\langle\psi_k|\sigma_{  1  }^{(a)}|\psi_k\rangle=0$.  Therefore, from equation (\ref{exp2}),
\begin{equation}
  \rho_{k,\mathcal{A}}=\frac{1}{2}\langle\psi_k|\sigma_{  1  }^{(0)}|\psi_k\rangle\sigma^{(0)}=\frac{I_2}{2}
\end{equation}
The value of $\Tr_{\mathcal{A}}\left(\rho_{k,\mathcal{A}}^2\right)$ is now seen to be $\frac{1}{2}$, for all values of $k$, as claimed.
\end{proof}

In fact this proof gives a more general result.  Subsystem $\mathcal{A}$ may be extended to the qubits labelled $(1)$ to $(l)$ (for $l=1,\dots,n$).  The density matrix $\rho_{k,\mathcal{A}}$ then reads
\begin{equation}
  \rho_{k,\mathcal{A}}=\frac{1}{2^l}\sum_{a_1,\dots,a_l=0}^3\langle\psi_k|\sigma_{  1  }^{( a_1 )}\dots\sigma_{  l  }^{( a_l )}|\psi_k\rangle\sigma^{( a_1 )}\otimes\dots\otimes\sigma^{( a_l )}
\end{equation}
as in equation (\ref{exp}).  By an analogous reasoning to that in (\ref{sym}), it is then deduced that the coefficients $\langle\psi_k|\sigma_{  1  }^{( a_1 )}\dots\sigma_{  l  }^{( a_l )}|\psi_k\rangle$, for which an odd number of the $\{a_j\}_{j=1}^l$ are non-zero, vanish.

\section{Proof of Theorem \ref{DOS} and extended results}\label{Th1proof}
The proof of Theorem \ref{DOS} will now be given.  The proof of this theorem is an adaptation of that in \cite{Mahler,Mahler2005} for the distribution of eigenvalues of a generic qubit spin chain Hamiltonian, in a fixed product state.  The Hamiltonian will be split into many commuting blocks to which Lyapunov's central limit theorem \cite{Bill} applies, by removing some interaction terms.  The error generated by the removal of these terms is shown to vanish in the large $n$ limit.
\begin{proof}
  First, $H_n$ is split into blocks $b_k$ acting nontrivially on $l$ consecutive qubits, in the same fashion as in \cite{Mahler}, that is let
	\begin{align}\label{b}
		b_k&=\sum_{j=1}^{l-1}h_{(k-1)\cdot l+j}
	\end{align}
	for $k=1,2,\dots,\left\lceil\frac{n}{l}\right\rceil$, where
	\begin{equation}\label{h}
		h_0=\frac{1}{\sqrt{n}}\sum_{a=0}^3\sum_{b=1}^3\alpha_{a,b,n}\sigma_n^{(a)}\sigma_{1}^{(b)}\qquad\text{and}\qquad
		h_j=\frac{1}{\sqrt{n}}\sum_{a=0}^3\sum_{b=1}^3\alpha_{a,b,j}\sigma_j^{(a)}\sigma_{j+1}^{(b)}
	\end{equation}
	for $j=1,2,\dots,n-1$ and $h_j=0$ otherwise.  Also let
	\begin{equation}
		B=\sum_{k=1}^{\left\lceil\frac{n}{l}\right\rceil}b_k\qquad\text{and}\qquad
		L=\sum_{k=1}^{\left\lceil\frac{n}{l}\right\rceil}h_{(k-1)\cdot l}
	\end{equation}
	so that $H_n$ is a sum of the blocks $B$ and links $L$, that is $H_n=B+L$.  The value $l=l(n)$ is chosen such that
	\begin{equation}\label{l}
		\lim_{n\to\infty}\frac{1}{l}=0\qquad\text{and}\qquad\lim_{n\to\infty}\frac{l}{n}=0
	\end{equation}

	The characteristic function $\psi_n(t)$ associated to $\rho_n(\lambda)$ is, by definition,
	\begin{equation}
		\psi_n(t)=\mathbb{E}_{\rho_n(\lambda)}\left(\e^{\im t\lambda}\right)=\frac{1}{2^n}\sum_{k=1}^{2^n}\int_{-\infty}^\infty\e^{\im t\lambda}\delta(\lambda-\lambda_k)\di\lambda=\frac{1}{2^n}\Tr_{\mathcal{AB}}\left(\e^{\im tH_n}\right)
	\end{equation}
	Let the analogous characteristic function for the matrix $B$ be $\phi_n(t)=\frac{1}{2^n}\Tr_{\mathcal{AB}}\left(\e^{\im tB}\right)$.  It will now be shown that
	\begin{equation}
		\lim_{n\to\infty}\left|\psi_n(t)-\phi_n(t)\right|=\lim_{n\to\infty}\left|\frac{1}{2^n}\Tr_{\mathcal{AB}}\left(\e^{\im tH_n}\right)-\frac{1}{2^n}\Tr_{\mathcal{AB}}\left(\e^{\im tB}\right)\right|=0
	\end{equation}
	for all fixed $t\in\mathbb{R}$.  The integral identity for any $2^n\times2^n$ Hermitian matrices $X$ and $Y$ \cite{Mahler}
	\begin{equation}
		\e^{\im t(X+Y)}-\e^{\im tX}=\im\int_0^t\e^{\im(t-s)(X+Y)}Y\e^{\im tX}\di s
	\end{equation}
	along with the Cauchy-Schwartz inequality, for any $2^n\times2^n$ matrix $M$,
	\begin{equation}
  	\left|\Tr\left(M\right)\right|^2=\left|\sum_{j=1}^{2^n}M_{jj}\right|^2\leq\sum_{j=1}^{2^n}|M_{jj}|^2\sum_{k=1}^{2^n}|1|^2\leq2^n\Tr\left(MM^\dagger\right)
	\end{equation}
	and the triangle inequality yield that
	\begin{align}\label{genbound1}
		\left|\frac{1}{2^n}\Tr_{\mathcal{AB}}\left(\e^{\im tH_n}\right)-\frac{1}{2^n}\Tr_{\mathcal{AB}}\left(\e^{\im tB}\right)\right|
		&\leq\int_0^t\left|\frac{1}{2^n}\Tr_{\mathcal{AB}}\left(\e^{\im(t-s)H_n}L\e^{\im tB}\right)\right|\di s
		\leq \sqrt{\frac{t^2}{2^n}\Tr_{\mathcal{AB}}\left(LL^\dagger\right)}
	\end{align}
	The matrices $\sigma_j^{(a)}\sigma_{j+1}^{(b)}$ appearing in $L$ are orthonormal with respect to the (scaled) Hilbert-Schmidt inner product $(X,Y)=\frac{1}{2^n}\Tr_{\mathcal{AB}}\left(XY^\dagger\right)$ so therefore
	\begin{align}
		\frac{t^2}{2^n}\Tr_{\mathcal{AB}}\left(LL^\dagger\right)
		&\leq t^2\sum_{k=1}^{\left\lceil\frac{n}{l}\right\rceil}\sum_{a=0}^3\sum_{b=1}^3\frac{\alpha_{a,b,(k-1)\cdot l}^2}{n}
		\leq t^2\left\lceil\frac{n}{l}\right\rceil\frac{12C^2}{n}\to0
	\end{align}
	for all fixed $t\in\mathbb{R}$ by the bounds in (\ref{range}) and assumption in (\ref{l}).
	
	The characteristic function $\phi_n(t)$ factors into the product of $\left\lceil\frac{n}{l}\right\rceil$ characteristic functions as the $b_k$ are supported on disjoint collections of sites, and so $[b_k,b_{k^\prime}]=0$ for all $k\neq k^{\prime}$, that is,
	\begin{equation}
		\phi_n(t)
		=\frac{1}{2^n}\Tr_{\mathcal{AB}}\left(\e^{\im tB}\right)
		=\frac{1}{2^n}\Tr_{\mathcal{AB}}\left(\prod_{k=1}^{\left\lceil\frac{n}{l}\right\rceil}\e^{\im tb_k}\right)
		=\prod_{k=1}^{\left\lceil\frac{n}{l}\right\rceil}\frac{1}{2^n}\Tr_{\mathcal{AB}}\left(\e^{\im tb_k}\right)
	\end{equation}
	
Lyapunov's central limit theorem \cite{Bill} can now be applied to the characteristic function $\phi_n(t)$, which is the characteristic function of a sum of independent random variables, each with the characteristic function $\frac{1}{2^n}\Tr_{\mathcal{AB}}\left(\e^{\im tb_k}\right)$ respectively.  Lyapunov's central limit theorem states that if $\hat{x}_{n,1},\dots,\hat{x}_{n,r(n)}$ are some independent random variables (not necessarily identically distributed) each with finite mean $\mathbb{E}\left(\hat{x}_{n,j}\right)$ and variance $\mathbb{E}\left(\hat{x}_{n,j}^{2}\right)$, for each $n\in\mathbb{N}$ and some strictly increasing function $r:\mathbb{N}\to\mathbb{N}$, and if
\begin{equation}
	s_{n}^{2}=\sum_{j=1}^{r(n)}\mathbb{E}\left(\hat{x}_{n,j}^{2}\right)
\end{equation}
and the Lyapunov condition
\begin{equation}
	\lim_{n\to\infty}\frac{1}{s_{n}^{2+\delta}}\sum_{j=1}^{r(n)}\mathbb{E}\left(\left|\hat{x}_{n,j}-\mathbb{E}\left(\hat{x}_{n,j}\right)\right|^{2+\delta}\right)=0
\end{equation}
is satisfied for some $\delta>0$ then the distribution of the sum
\begin{equation}
	\frac{1}{s_{n}}\sum_{j=1}^{r(n)}\Big(\hat{x}_{n,j}-\mathbb{E}\left(\hat{x}_{n,j}\right)\Big)
\end{equation}
converges in distribution to a standard normal random variable.

A sufficient Lyapunov condition on the fourth moment of the distributions associated to the characteristic functions $\frac{1}{2^n}\Tr_{\mathcal{AB}}\left(\e^{\im tb_k}\right)$ reads
	\begin{equation}\label{Lya}
		\lim_{n\to\infty}\frac{1}{s_n^4}\sum_{k=1}^{\left\lceil\frac{n}{l}\right\rceil}\frac{1}{2^n}\Tr_{\mathcal{AB}}\left(b_k^4\right)=0,\qquad\qquad
		s_n^2=\sum_{k=1}^{\left\lceil\frac{n}{l}\right\rceil}\frac{1}{2^n}\Tr_{\mathcal{AB}}\left(b_k^2\right)
	\end{equation}
  Here, the first, second and fourth moments of the distributions associated with characteristic functions $\frac{1}{2^n}\Tr_{\mathcal{AB}}\left(\e^{\im tb_k}\right)$, used in the condition, are read off as the coefficients of $\frac{\im t}{1!}$, $\frac{(\im t)^2}{2!}$ and $\frac{(\im t)^4}{4!}$ respectively in the Taylor expansion of $\frac{1}{2^n}\Tr_{\mathcal{AB}}\left(\e^{\im tb_k}\right)$ about $t=0$.
  
  Before verifying that the Lyapunov condition (\ref{Lya}) holds, it is noted that as the matrices $\sigma_j^{(a)}\sigma_{j+1}^{(b)}$ appearing in $H_n$ and $b_k$ are orthonormal with respect to the (scaled) Hilbert-Schmidt inner product $(X,Y)=\frac{1}{2^n}\Tr_{\mathcal{AB}}\left(XY^\dagger\right)$,
  \begin{align}\label{genbound2}
  	\left|\frac{1}{2^n}\Tr_{\mathcal{AB}}\left(H_n^2\right)-\sum_{k=1}^{\left\lceil\frac{n}{l}\right\rceil}\frac{1}{2^n}\Tr_{\mathcal{AB}}\left(b_k^2\right)\right|
	  &=\left|\sum_{k=1}^{\left\lceil\frac{n}{l}\right\rceil}\sum_{a=0}^3\sum_{b=1}^3\frac{\alpha_{a,b,(k-1)\cdot l}^2}{n}\right|
	  \leq\left\lceil\frac{n}{l}\right\rceil\frac{12C^2}{n}
	  \to0
  \end{align}
  seen by directly calculating the traces.  Therefore $s_n^2\to1$ by the assumption that $\frac{1}{2^n}\Tr_{\mathcal{AB}}\left(H_n^2\right)=1$ in (\ref{range}).
  
  If the Lyapunov condition (\ref{Lya}) holds then Lyapunov's central limit theorem states that the distributions associated to the characteristic functions $\frac{1}{s_n}\phi_n(t)$ tend weakly to a standard normal distribution. By the continuity theorem \cite{Bill} pointwise convergence of the characteristic functions is equivalent to the weak convergence of the related distributions.  Therefore as $s_n\to1$ and $|\phi_n(t)-\psi_n(t)|\to0$ the density of states distribution $\rho_n(\lambda)$ tends weakly to that of a standard normal distribution as claimed.
  
  All that remains to be shown is that the Lyapunov condition (\ref{Lya}) holds.  This is equivalent to showing that
  \begin{equation}
  	\lim_{n\to\infty}\sum_{k=1}^{\left\lceil\frac{n}{l}\right\rceil}\frac{1}{2^n}\Tr_{\mathcal{AB}}\left(b_k^4\right)=0
  \end{equation}
  as $s_n\to1$.  By the definition of $b_k$ in (\ref{b})
  \begin{equation}
  	\frac{1}{2^n}\Tr_{\mathcal{AB}}\left(b_k^4\right)=\sum_{q,p,r,s=1}^{l-1}\frac{1}{2^n}\Tr_{\mathcal{AB}}\left(h_{(k-1)\cdot l+q}h_{(k-1)\cdot l+p}h_{(k-1)\cdot l+r}h_{(k-1)\cdot l+s}\right)
	\end{equation}
  For a term in this sum to be non-zero, it must be the case that the four factors $h_j$ can be split into two pairs with the indices of the $h_j$ differing by at most one within each pair.  If this is not the case, then the term evaluates to zero as the factors $h_j$ will act on completely separate sites and each has trace zero.  There are then at most $3^3(l-1)^2$ non-zero terms.  Each non-zero term can then by definition (\ref{h}) be expanded as ${12}^4$ terms, each containing a four fold product of the matrices $\sigma_j^{(a)}\sigma_{j+1}^{(b)}$, and each with modulus bounded by at most $\left(\frac{C}{\sqrt{n}}\right)^4$ by the assumptions in (\ref{range}) and the normalisation of the Pauli matrices.  Therefore
	\begin{equation}\label{genbound3}
		\left|\sum_{k=1}^{\left\lceil\frac{n}{l}\right\rceil}\frac{1}{2^n}\Tr_{\mathcal{AB}}\left(b_k^4\right)\right|
		\leq \left\lceil\frac{n}{l}\right\rceil3^3(l-1)^2{12}^4\left(\frac{C}{\sqrt{n}}\right)^4
		\leq3^74^4C^4\left(\frac{l}{n}+\frac{l^2}{n^2}\right)
	\end{equation}
  (seen by bounding $\left\lceil\frac{n}{l}\right\rceil$ by $\frac{n}{l}+1$ and $l-1$ by $l$) which tends to zero as $n\to\infty$ by the assumptions (\ref{l}), completing the proof.
\end{proof}

It is noted that this theorem holds, in particular, for translationally invariant Hamiltonians.

\subsection{More general interactions}\label{geometry}
More general qubit interactions can be dealt with (cf. \cite{Mahler,Mahler2005}).  Consider a sequence of systems of $n$ qubits where in each system each qubit is allowed to interact with a fixed number (independent of $n$) other qubits, for example interactions on a two dimensional lattice.  The general Hamiltonian of this system with $n$ qubits will have the form
\begin{equation}
  H_n^{(gen)}=\frac{1}{\sqrt{n}}\left(\sum_{(j, k)}\sum_{a,b=1}^3\alpha_{a,b,j,k}\sigma_j^{(a)}\sigma_k^{(b)}+\sum_{j=1}^n\sum_{a=1}^3\alpha_{a,0,j,0}\sigma_j^{(a)}\right)
\end{equation}
where the sum over $(j,k)$ represents the sum over all sites labelled $j$ and $k$ (for $j<k$) between which an interaction is present and the $\alpha_{a,b,j,k}$ are some real coefficients.  Here the local terms, proportional to $\sigma_j^{(a)}$, have been separated from the interaction terms, proportional to $\sigma_j^{(a)}\sigma_k^{(b)}$, to avoid repetitions in the sum.

For Hamiltonians $H_n^{(gen)}$ such that $r$ `links' or `interactions' of the form
\begin{equation}
  \sum_{a,b=1}^3\alpha_{a,b,j,k}\sigma_j^{(a)}\sigma_k^{(b)}
\end{equation}
can be removed to leave a sum of $m$ operators, each supported on non-intersecting subsets of the $n$ qubits each containing at most $q$ qubits, a modified version of the proof of Theorem \ref{DOS} holds.  If the $\alpha_{a,b,j,k}$ in such a $H_n^{(gen)}$ also satisfy an analogous condition to (\ref{range}), then for an analogous version of the proof of Theorem \ref{DOS} to hold it must be the case that:
\begin{equation}
  \frac{r}{n}\to0
\end{equation}
as $n\to\infty$ for the analogous equations to (\ref{genbound1}) and (\ref{genbound2}) to hold and that
\begin{equation}
  \frac{mq^2}{n^2}\to0
\end{equation}
as $n\to\infty$ for the analogous equation to (\ref{genbound3}) to hold.

For example, consider the case of a two dimensional $p\times p$ cyclic lattice of $n=p^2$ qubits.  A suitable partitioning into blocks would be to group the qubits into $m=\left\lceil\frac{p}{l}\right\rceil^2$ neighbouring $l\times l$ blocks of at most $q=l^2$ qubits (with possibly smaller blocks on the lattice's boundary) and remove all the $r=2p\left\lceil\frac{p}{l}\right\rceil$ `links' between these blocks.  The two condition above then read
\begin{equation}
  \frac{r}{n}=\frac{2p\left\lceil\frac{p}{l}\right\rceil}{p^2}\to0\qquad\text{and}\qquad
  \frac{mq^2}{n^2}=\frac{\left\lceil\frac{p}{l}\right\rceil^2l^4}{p^4}\to0
\end{equation}
Choosing $l(p)$ such that $\frac{1}{l}\to0$ and $\frac{l}{p}\to0$ as $p\to\infty$ (or equivalently $n\to\infty$) satisfies these conditions.  This then provides an example of a generalisation of the applicability of Theorem \ref{DOS}.

\subsection{Interacting qudits}\label{qudits}
Qudits (of fixed dimension $d$, independent of $n$) may also be used in place of qubits.  Given a matrix basis for each qudit site which is orthogonal under the Hilbert-Schmidt inner product, an analogous proof to that of Theorem \ref{DOS} again holds.  In effect this involves increasing the range of the indices $a$ and $b$ by some fixed amount in the proof.   This only leads to a change of the constants in the bounds given.

\subsection{A conjecture of Atas and Bogomolny}
Atas and Bogomolny have conjectured \cite{Atas2014} that for
\begin{equation}
  H_n^{(BA)}=\frac{1}{\sqrt{n}}\sum_{j=1}^n\left(\sigma_j^{(1)}\sigma_{j+1}^{(1)}+\alpha_1\sigma_j^{(1)}+\alpha_3\sigma_j^{(3)}\right)
\end{equation}
where $\alpha_1$ and $\alpha_2$ are fixed real coefficients, it is the case that for all positive integers $k$,
\begin{equation}
  \lim_{n\to\infty}\frac{1}{2^n}\Tr\left({H_n^{(BA)}}^{2k}\right)=(1+\alpha_1^2+\alpha_2^2)^{2k}\frac{(2k)!}{2^kk!}
\end{equation}

This can be seen as a corollary of Theorem \ref{DOS}.  Theorem \ref{DOS} states that for the sequence of Hamiltonians
\begin{equation}
  \frac{H_n^{(BA)}}{\sqrt{1+\alpha_1^2+\alpha_3^2}}
\end{equation}
the associated density of states distributions tend weakly to that of a standard normal distribution.  Rescaling with the constant $\sqrt{1+\alpha_1^2+\alpha_3^2}$ implies that for the sequence of Hamiltonians $H_n^{(BA)}$ the density of states distributions tend weakly to that of a  normal distribution with mean zero and variance of $1+\alpha_1^2+\alpha_3^2$.  The $2k^{th}$ moment of this limiting distribution is by definition given by
\begin{equation}
  \lim_{n\to\infty}\frac{1}{2^n}\Tr\left({H_n^{(BA)}}^{2k}\right)
\end{equation}
(by considering the coefficients of the characteristic functions $\frac{1}{2^n}\Tr\left(\e^{itH_n^{(BA)}}\right)$ of the associated density of states distributions) or equivalently by
\begin{equation}
  (1+\alpha_1^2+\alpha_3^2)^{2k}(2k-1)!!
\end{equation}
(by considering the moments of the normal distribution), proving the conjecture.

\section{Discussion and open questions}\label{dis}
\subsection{Tightness of the bound in Theorem \ref{th2} (Reduced eigenstate purity)}
Numerically there exist examples of Hamiltonians for which the bound in Theorem \ref{th2} appears asymptotically tight, that is linear in $\frac{1}{n}$ for $n=2,\dots,32$.

Figure \ref{etang}a shows the average value of the linear entropy (one minus the purity) of the reduced eigenstates over $s=2^6$ random samples (coefficients taken as standard normal random variables) of $H_{13}^{(inv)}$ on a block of $l=1,2,3,4,5$ qubits.  The values are ordered with respect to increasing eigenvalue.  Throughout the bulk of the spectrum a value close to the maximum $1-\frac{1}{2^l}$ is seen whereas at the edge of the spectrum a deviation from this is observed.  This is consistent with the area law for low lying eigenstates.

\begin{figure}
  \centering
  \includegraphics{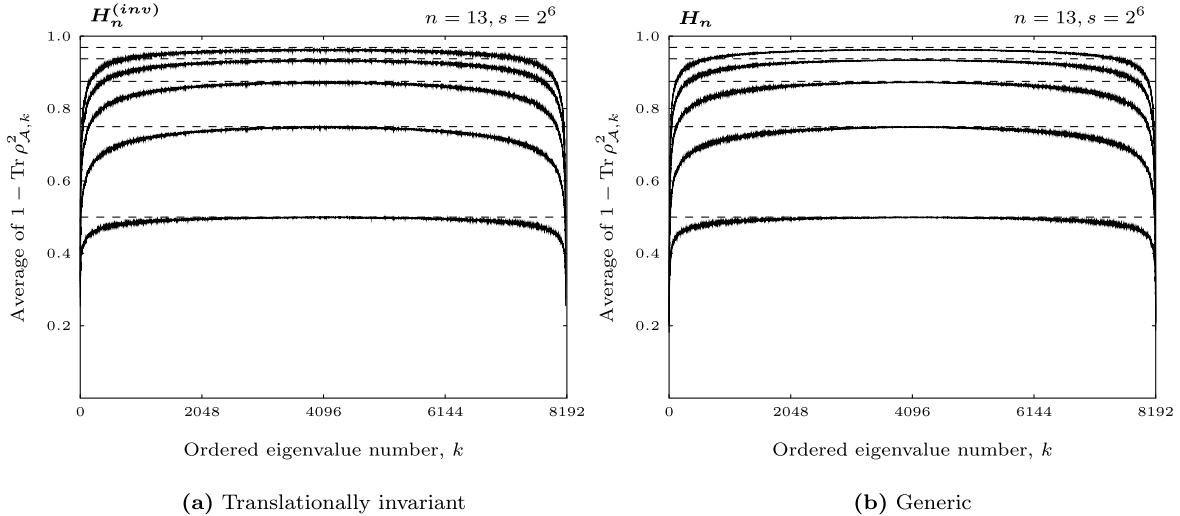}
  \caption{The average value of the linear entropy of the reduced eigenstates over $s=2^6$ random samples (coefficients taken as standard normal random variables) of $H_{13}^{(inv)}$ (Subfigure {\bf (a)}) and $H_{13}$ (Subfigure {\bf (b)}) on a block of $l=1,2,3,4,5$ qubits (solid lines bottom to top).  The dashed lines are at $1-\frac{1}{2^n}$ and give the maximal value of the associated linear entropy values.}
  \label{etang}
\end{figure}

\subsection{Eigenvalue uniqueness}
Non-degeneracy in the eigenvalues of a matrix $H_n^{(inv)}$ is useful for the interpretation of Theorem \ref{th2} with regards to nearest-neighbour Hamiltonians (or more generally in the case of a matrix $G_n^{(inv)}$).  This ensures the uniqueness of the eigenstates of $H_n^{(inv)}$ (up to phase) and therefore in such a case Theorem \ref{th2} refers to the unique eigenstates of a nearest-neighbour spin chain Hamiltonian.  In Section \ref{uniqueness} degeneracy was shown to be the case for generic matrices $H_n^{(inv)}$ for $n$ an odd prime number.  For $4<n<14$ numerical examples of $H_n^{(inv)}$ have been found with a simple spectrum so that the proof of Lemma \ref{exist} can be used to show that this is the case for generic matrices $H_n^{(inv)}$ for these values of $n$.  For larger values of $n$, eigenvalue uniqueness remains open.

The presences of the local terms proportional to $\sigma_j^{(a)}$ plays a crucial part in the simplicity of the spectrum of $H_n^{(inv)}$.  Without them, it was shown in our previous paper \cite{KLW2014_2}, that for odd values of $n$ a Kramers degeneracy exists leading to doubly degenerate eigenvalues throughout the spectrum.

\subsection{Extension of Theorem \ref{th2} to non-translationally invariant matrices}
Figure \ref{etang}b shows the average value of the linear entropy of the reduced eigenstates over $s=2^6$ random samples (coefficients taken as standard normal random variables) of $H_{13}$ on a block of $l=1,2,3,4,5$ qubits.  The values are again ordered with respect to increasing eigenvalue.  It can be shown that the eigenvalues of $H_n$ are also generically non-degenerate (by considering a non-degenerate Hamiltonian of the form $\sum_j\epsilon^j\sigma_j^{(3)}$ for some real $\epsilon$ and applying Lemma \ref{exist}).  A remarkable similarity is seen to the case of $H_n^{(inv)}$ and it is possible that a similar result to that of Theorem \ref{th2} could hold for matrices of the form $H_n$.  The machinery in the proof of Theorem \ref{th2} is not applicable to matrices of this form though, and this question remains open.

\subsection{Tightness of the bound in Theorem \ref{DOS} (Density of states)}
In our related paper \cite{KLW2014_2}, matrices of the form 
\begin{equation}
	H_n^{(JW)}=\frac{1}{\sqrt{\mathcal{C}}}\sum_{j=1}^{n-1}\sum_{a,b=1}^2{\alpha}_{a,b,j}\sigma_{j}^{(a)}\sigma_{  j +1 }^{(b)}+\frac{1}{\sqrt{\mathcal{C}}}\sum_{j=1}^{n}{\alpha}_{3,0,j}\sigma_j^{(3)}
\end{equation}
where the ${\alpha}_{a,b,j}$ are some real constants and $\mathcal{C}$ is the sum of their squares, were diagonalised numerically for values of $n$ up to $32$ using the Jordan-Wigner transform.  Such matrices provide numerical evidence that the bound in Theorem \ref{DOS} is at least asymptotically tight.  In \cite{KLW2014_2} it was seen numerically that the number $\mathcal{N}_n(x)$ of the eigenvalues of a generic instance of $H_n^{(JW)}$, with values less than or equal to $x$, appear to satisfy
\begin{equation}
	\left|\frac{\mathcal{N}_n(x)}{2^n}-\frac{1}{\sqrt{2\pi}}\int_{\infty}^{x}\e^{-\frac{\lambda^2}{2}}\di\lambda\right|\leq \frac{c(x)}{n}
\end{equation}
where $c(x)$ is independent of $n$.  Whether this is the true rate of convergence remains an open question.

\section*{Appendices}
\appendix
\section{Spin chain diagonalisation via the Jordan-Wigner transform}\label{JW}
The Jordan-Wigner transformation \cite{Nielsen2005} gives a standard route to diagonalising the matrix
\begin{equation}
	H_n^{(\epsilon XY+Z)}=\sum_{j=1}^n\left(\epsilon\sigma_{  j  }^{(1)}\sigma_{  j+1  }^{(2)}+\sigma_{  j  }^{(3)}\right)
\end{equation}
We will require this when $n$ takes odd prime values.  The transform is used to define the Fermi creation and annihilation operators
\begin{align}
  a_j=\left(\prod_{1\leq l<j}\sigma_l^{(3)}\right)S_j\qquad\qquad\text{with}\qquad\qquad
  a_j^\dagger=\left(\prod_{1\leq l<j}\sigma_l^{(3)}\right)S_j^\dagger
\end{align}
where
\begin{equation}
  S_j=\frac{\sigma_{  j  }^{(   1   )}+\im\sigma_{  j  }^{(   2   )}}{2}\qquad\qquad\text{with}\qquad\qquad
  S_j^\dagger=\frac{\sigma_{  j  }^{(   1   )}-\im\sigma_{  j  }^{(   2   )}}{2}
\end{equation}
so that for $j=1,\dots,n-1$, as seen in \cite{Nielsen2005},
\begin{align}
	\sigma_{  j  }^{(1)}\sigma_{  j+1  }^{(2)}&=\im\left(a_j-a_j^\dagger\right)\left(a_{j+1}-a_{j+1}^\dagger\right)\nonumber\\
	\sigma_{  n  }^{(1)}\sigma_{  1  }^{(2)}&=-\im\eta\Big(a_n-a_n^\dagger\Big)\left(a_1-a_1^\dagger\right)\nonumber\\
	\sigma_{  j  }^{(3)}&=a_ja_j^\dagger-a_j^\dagger a_j
\end{align}
The canonical commutation relations for Fermi operators, $a_ja_k=-a_ka_j$ and $a_ja_k^\dagger=-a_k^\dagger a_j+I\delta_{jk}$, can also be verified to hold.  This allows $H_n^{(\epsilon XY+Z)}$ to be rewritten as
\begin{equation}
	H_n^{(\epsilon XY+Z)}=\im\epsilon\sum_{j=1}^{n-1}\left(a_j-a_j^\dagger\right)\left(a_{j+1}-a_{j+1}^\dagger\right)-\im\epsilon\eta\Big(a_n-a_n^\dagger\Big)\left(a_1-a_1^\dagger\right)+\sum_{j=1}^n\left(a_ja_j^\dagger-a_j^\dagger a_j\right)
\end{equation}

It will now be shown that $H_n^{(\epsilon XY+Z)}$ is block diagonal in the product basis formed from the $n$-fold tensor product of eigenstates of $\sigma^{(3)}=|0\rangle\langle0|-|1\rangle\langle1|$, denoted $|0\rangle$ and $|1\rangle$.  Let this basis be denoted $|\boldsymbol{x}\rangle_z=|x_1\rangle\otimes\dots\otimes|x_n\rangle$ for some multi-index $\boldsymbol{x}=(x_1,\dots,x_n)\in\{0,1\}^n$.  The Hamiltonian $H_n^{(\epsilon XY+Z)}$ commutes with the operator
\begin{equation}
	\eta=\prod_{j=1}^{n}\sigma_{  j  }^{(3)}
\end{equation}
which has two eigenvalues $\pm1$, this is most easily seen when $H_n^{(\epsilon XY+Z)}$ is expressed in the Pauli basis. The operator $\eta$ is diagonal in the basis $|\boldsymbol{x}\rangle_z$ so that $H_n^{(\epsilon XY+Z)}$ must be block diagonal in this basis with two blocks corresponding to the two eigenvalues of $\eta$.  The diagonalisation will now be performed on each block, labelled by $\eta=\pm1$, separately.

\subsubsection*{The $\eta=-1$ block}
For states in the $\eta=-1$ subspace of the Hilbert space, $H_n^{(\epsilon XY+Z)}$ acts as
\begin{equation}
	H_n^{-}=\im\epsilon\sum_{j=1}^{n}\left(a_j-a_j^\dagger\right)\left(a_{j+1}-a_{j+1}^\dagger\right)+\sum_{j=1}^n\left(a_ja_j^\dagger-a_j^\dagger a_j\right)
\end{equation}
with the periodic boundary conditions $a_{j+n}=a_j$ imposed.  Again, $H_n^-$ is seen to commute with the operator $\eta$.  Using the canonical commutation relations for Fermi operators, this quadratic form in the Fermi operators $\{a_j,a_j^\dagger\}$ can be expressed as
\begin{equation}
	H_n^{-}=
	\begin{pmatrix}
		\boldsymbol{a}^\dagger&
		\boldsymbol{\tilde{a}}^\dagger
	\end{pmatrix}
	\begin{pmatrix}
		A-I&-A\\
		-A&A+I
	\end{pmatrix}
	\begin{pmatrix}
		\boldsymbol{a}\\
		\boldsymbol{\tilde{a}}
	\end{pmatrix}
\end{equation}
where $\boldsymbol{a}$ is the column vector with entries $a_1,\dots,a_n$, $\boldsymbol{\tilde{a}}$ is the column vector with entries $a_1^\dagger,\dots,a_n^\dagger$, $\boldsymbol{a}^\dagger$ is the row vector with entries $a_1^\dagger,\dots,a_n^\dagger$, $\boldsymbol{\tilde{a}}^\dagger$ is the row vector with entries $a_1,\dots,a_n$ and
\begin{equation}
	A=\frac{\im\epsilon}{2}
	\begin{pmatrix}
		0&-1&0&\cdots&0&1\\
	  1&0&-1&\ddots&&0\\
	  0&1&\ddots&\ddots&&\vdots\\
	  \vdots&\ddots&\ddots&&&0\\
	  0&&&&0&-1\\
	  -1&0&\cdots&0&1&0
	\end{pmatrix}
\end{equation}

A further set of Fermi operators $\{b_j, b_j^\dagger\}$ can be defined by the unitary transform
\begin{equation}
	\begin{pmatrix}
		\boldsymbol{a}\\
		\boldsymbol{\tilde{a}}
	\end{pmatrix}
	=
	\begin{pmatrix}
		U&0\\
		0&\overline{U}
	\end{pmatrix}
	\begin{pmatrix}
		\boldsymbol{b}\\
		\boldsymbol{\tilde{b}}
	\end{pmatrix}
\end{equation}
for the unitary matrix $U_{jk}=\frac{1}{\sqrt{n}}\omega_{j}^{k}$ where $\omega_j=\e^{\frac{2\pi\im j}{n}}$ (the operator $\{b_j, b_j^\dagger\}$ are guaranteed to be Fermi operators \cite{Nielsen2005}). In fact, $U$ represents a discrete periodic Fourier transform and $A$ is proportional to a circulant matrix.  By \cite[p.388]{circulant}, $U$ diagonalises $A$ so that $U^\dagger AU=D$ where $D_{jk}=\delta_{jk}\epsilon\sin\left(\frac{2\pi j}{n}\right)$.  Furthermore, $\overline{U}_{jk}=\frac{1}{\sqrt{n}}\omega_{j}^{-k}$ so that if $P_{jk}=\delta_{j\kappa(k)}$ with $\kappa(k)=n-k$ for $k=1,\dots,n-1$ and  $\kappa(n)=n$ then
\begin{equation}
  U^\dagger A\overline{U}=U^\dagger AUP=DP
\end{equation}
The Hamiltonian $H^-$ is therefore transformed to
\begin{equation}
	\begin{pmatrix}
		\boldsymbol{b}^\dagger&
		\boldsymbol{\tilde{b}}^\dagger
	\end{pmatrix}
	\begin{pmatrix}
		U^\dagger AU-I&-U^\dagger A\overline{U} \\
		\overline{U^\dagger A\overline{U}}&-\overline{U^\dagger AU}+I
	\end{pmatrix}
	\begin{pmatrix}
		\boldsymbol{b}\\
		\boldsymbol{\tilde{b}}
	\end{pmatrix}
	=
	\begin{pmatrix}
		\boldsymbol{b}^\dagger&
		\boldsymbol{\tilde{b}}^\dagger
	\end{pmatrix}
	\begin{pmatrix}
		D-I&-DP \\
		DP&-D+I
	\end{pmatrix}
	\begin{pmatrix}
		\boldsymbol{b}\\
		\boldsymbol{\tilde{b}}
	\end{pmatrix}
\end{equation}

Due to the form of $D$ and $DP$, $H_n^{-}$ can now be rewritten as
\begin{equation}
	\sum_{j=1}^n
	\begin{pmatrix}
		b_j^\dagger&
		b_{\kappa(j)}
	\end{pmatrix}
	\begin{pmatrix}
		\epsilon\mu_j-1&-\epsilon\mu_j\\
		\epsilon\mu_{\kappa(j)}&-\epsilon\mu_{\kappa(j)}+1
	\end{pmatrix}
	\begin{pmatrix}
		b_j\\
		b_{\kappa(j)}^\dagger
	\end{pmatrix}
\end{equation}
with $\mu_j=\sin\left(\frac{2\pi j}{n}\right)$.  Each of the $2\times 2$ matrices in this expression may be diagonalised by a unitary matrix, therefore implying a further unitary transformation from the $\{b_j,b_j^\dagger\}$ to new Fermi operators $\{c_j,c_j^\dagger\}$ (again a unitary matrix may be chosen such that the operators $\{c_j,c_j^\dagger\}$ are again Fermi operators \cite{Nielsen2005}).  As $\mu_j=-\mu_{\kappa(j)}$, the matrix
\begin{equation}
	\begin{pmatrix}
		\epsilon\mu_j-1&-\epsilon\mu_j\\
		\epsilon\mu_{\kappa(j)}&-\epsilon\mu_{\kappa(j)}+1
	\end{pmatrix}=
	\begin{pmatrix}
		\epsilon\mu_j-1&-\epsilon\mu_j\\
		-\epsilon\mu_{j}&\epsilon\mu_{j}+1
	\end{pmatrix}
\end{equation}
has the eigenvalues $\chi_j^{\pm}=\epsilon\mu_j\pm\sqrt{\epsilon^2\mu_j^2+1}$ so that
\begin{equation}
	H_n^{-}
	=\sum_{j=1}^n\left(\chi_j^{-}c_j^\dagger c_j+\chi_j^{+}c_{\kappa(j)}c_{\kappa(j)}^\dagger\right)
	=\sum_{j=1}^n\left(\chi_j^{-}c_j^\dagger c_j+\chi_{\kappa(j)}^{+}c_jc_j^\dagger\right)
\end{equation}
Then the Hilbert space admits the orthonormal Fermi basis
\begin{equation}
	|\boldsymbol{x}\rangle_c
	=\Big(c_1^\dagger\Big)^{x_1}\dots\Big(c_n^\dagger\Big)^{x_n}|\boldsymbol{0}\rangle_c
\end{equation}
for the multi-index $\boldsymbol{x}=(x_1,\dots,x_n)\in\{0,1\}^n$, where $|\boldsymbol{0}\rangle_c$ is a normalised state such that $c_j|\boldsymbol{0}\rangle_c=0$ for all $j$.  These states are exactly the eigenstates of $H_n^{-}$ and the corresponding eigenvalues are read off to be
\begin{equation}
  \lambda_{\boldsymbol{x}}=\sum_{j=1}^n\left(\left(\epsilon\mu_j-\sqrt{\epsilon^2\mu_j^2+1}\right)x_j+\left(\epsilon\mu_{\kappa(j)}+\sqrt{\epsilon^2\mu_{\kappa(j)}^2+1}\right)(1-x_j)\right)
\end{equation}
Substituting $\mu_j=-\mu_{\kappa(j)}$ simplifies this expression to
\begin{equation}
	\lambda_{\boldsymbol{x}}=\sum_{j=1}^n(2x_j-1)\left(\epsilon\mu_j-\sqrt{\epsilon^2\mu_j^2+1}\right)
\end{equation}

It has already been seen in the proof of Lemma \ref{dist} that the values $\lambda_{\boldsymbol{x}}$ are distinct for most values of $\epsilon>0$ in the case that $n$ is an odd prime.  So for such values, as $H_n^{-}$ commutes with $\eta$, the eigenstates $|\boldsymbol{x}\rangle_c$ of $H_n^{-}$ are also eigenstates of $\eta$ with eigenvalues $\pm1$.  Only the eigenvalues $\lambda_{\boldsymbol{x}}$ corresponding to eigenstates for which $\eta|\boldsymbol{x}\rangle_c=-|\boldsymbol{x}\rangle_c$ are also eigenvalues of $H_n^{(\epsilon XY+Z)}$, the others have no relevance to the spectrum of $H_n^{(\epsilon XY+Z)}$ as concern lies with the $\eta=-1$ eigenspace at this point.

By expressing $c_j^\dagger$ in terms of the the Pauli operators (inverting the previous transforms they are a linear combination of the Fermi operators $\{a_j,a_j^\dagger\}$), it is seen that $\eta$ and $c_j^\dagger$ anti-commute for each $j$ so that $\eta|\boldsymbol{x}\rangle_c=(-1)^r\big(c_1^\dagger\big)^{x_1}\dots\big(c_n^\dagger\big)^{x_n}\eta|\boldsymbol{0}\rangle_c$ where $r=\sum_jx_j$.  Now as $\eta|\boldsymbol{0}\rangle_c=\pm|\boldsymbol{0}\rangle_c$, either all the states $|\boldsymbol{x}\rangle_c$ for which $r$ is even (if $\eta|\boldsymbol{0}\rangle_c=-|\boldsymbol{0}\rangle_c$) or odd (if $\eta|\boldsymbol{0}\rangle_c=+|\boldsymbol{0}\rangle_c$) are eigenstates of $H_n^{(\epsilon XY+Z)}$.

\subsubsection*{The $\eta=1$ block}
The spectrum within the the $\eta=1$ subspace for odd prime values of $n$ can be deduced by symmetry.  The operator $\nu=\sum_{j=1}^n\sigma_{  j  }^{(1)}$ anti-commutes with $H_n^{(\epsilon XY+Z)}$ as $\sigma^{(1)}$ anti-commutes with both $\sigma^{(2)}$ and $\sigma^{(3)}$.  Any eigenstate $|\psi\rangle$ of $H_n^{(\epsilon XY+Z)}$ in the $\eta=-1$ subspace with eigenvalue $\lambda$ must then satisfy
\begin{equation}
	H_n^{(\epsilon XY+Z)}\left(\nu|\psi\rangle\right)=-\nu H_n^{(\epsilon XY+Z)}|\psi\rangle=-\lambda\left(\nu|\psi\rangle\right)
\end{equation}
The eigenstate $\nu|\psi\rangle$ of $H_n^{(\epsilon XY+Z)}$ must be in the $\eta=1$ subspace as $n$ is odd and therefore $\eta\nu=-\nu\eta$ so that $\eta\nu|\psi\rangle=-\nu\eta|\psi\rangle=\nu|\psi\rangle$.  This implies that the spectrum within the $\eta=1$ subspace is the negative of that within the $\eta=-1$ subspace.  This is precisely given by the values
\begin{equation}
	\lambda_{\boldsymbol{x}}=\sum_{j=1}^n(2x_j-1)\left(\epsilon\mu_j-\sqrt{\epsilon^2\mu_j^2+1}\right)
\end{equation}
indexed by the $\boldsymbol{x}$ for which $r=\sum_jx_j$ has the opposite parity as taken for the $\eta=-1$ block.

The entire spectrum for odd prime values of $n$ and most $\epsilon>0$ (and therefore all $\epsilon$ by the analyticity of the eigenvalues) is therefore given by
\begin{equation}
	\lambda_{\boldsymbol{x}}=\sum_{j=1}^n(2x_j-1)\left(\epsilon\mu_j-\sqrt{\epsilon^2\mu_j^2+1}\right)
\end{equation}

\footnotesize
\bibliographystyle{cmp}
\bibliography{SpinChains}

\end{document}